\tikzstyle{tre}=[circle,draw,minimum size=3mm]
\tikzstyle{btre}=[circle,draw,minimum size=4.5mm]
\newcommand{\etq}[1]{%
\draw (#1) node {\tiny $#1$};}
\renewcommand{\leq}{\leqslant}
\renewcommand{\geq}{\geqslant}
\newcommand{\NN}{\mathbb{N}}
\newcommand{\TT}{\mathcal{T}}
\newcommand{\TB}{\mathcal{BT}}
\newcommand{\BT}{\mathcal{BT}}
\newcommand{\FF}{\mathcal{F}}
\newcommand{\FB}{\mathcal{BF}}
\newcommand{\RR}{\mathbb{R}}
 \theoremstyle{plain}
 \newtheorem{theorem}{Theorem}
 \newtheorem{lemma}[theorem]{Lemma}
 \newtheorem{corollary}[theorem]{Corollary}
 \newtheorem{proposition}[theorem]{Proposition}
 \theoremstyle{definition}
 \newtheorem{definition}[theorem]{Definition}
 \newtheorem{remark}[theorem]{Remark}
\begin{document}
\begin{frontmatter}

\title{A new balance index for phylogenetic trees}

\author[uib1]{Arnau Mir}
\ead{arnau.mir@uib.es}
\author[uib1]{Francesc Rossell\'o\corref{cor1}}
\ead{cesc.rossello@uib.es}
\author[uib2]{Luc\'\i a Rotger}
\ead{luciarotger87@gmail.com}
\cortext[cor1]{Corresponding author}
\address[uib1]{Research Institute of Health Science (IUNICS) and Department of Mathematics and  Computer Science,
  University of the Balearic Islands,
  E-07122 Palma de
  Mallorca, Spain}
\address[uib2]{Department of Mathematics and  Computer Science,
  University of the Balearic Islands,
  E-07122 Palma de
  Mallorca, Spain}

\begin{abstract}
Several indices that measure the degree of balance of a rooted phylogenetic tree have been proposed so far in the literature. In this work we define and study a new index of this kind, which we call the \emph{total cophenetic index}: the sum, over all pairs of different leaves, of the depth of their least common ancestor. This index makes sense for arbitrary trees, can be computed in linear time and it has a larger range of values and a greater resolution power than other indices like Colless' or Sackin's. We compute its maximum and minimum values for arbitrary and binary trees, as well as exact formulas for its expected value  for binary trees under the Yule and the uniform models of evolution. As a byproduct of this study, we obtain an exact formula for the expected value of the Sackin index under the uniform model, a result that seems to be new in the literature.
\end{abstract}
\begin{keyword}
Phylogenetic tree\sep Imbalance index\sep Cophenetic value\sep Sackin index
\end{keyword}
\end{frontmatter}

\section{Introduction}

A \emph{phylogenetic tree} is a representation of the shared evolutionary history of a set of extant species. From the mathematical point a view, it is a leaf-labeled rooted tree, with its leaves representing  the extant species under study, its internal nodes representing common ancestors of some of them, the root representing the most recent common ancestor of all of them, and the arcs representing direct descendants through mutations. 

One of the most thoroughly studied shape properties of phylogenetic trees is their balance, that is, the degree to which the children of internal nodes tend to have the same number of descendant taxa.  This global degree of balance of a tree is usually quantified by means of a single number generically called an \emph{balance  index}.  The  two most popular  balance indices are  Sackin's \cite{Sackin:72} and Colless' \cite{Colless:82} (see \S 2.2), but there are many more
 \cite[Chap. 33]{fel:04}, and Shao and Sokal \cite[p. 1990]{Shao:90} explicitly advise to use more than one such index to quantify tree balance.

Such balance indices only depend on the topology of the trees, not on the branch lengths or the actual taxa labeling their leaves. Since it is believed that the raw topology of a phylogenetic tree already reflects, at least to some extent, the evolutionary processes that have produced it  \cite[Chap. 33]{fel:04}, these indices have also been widely used  as tools to test stochastic models  of evolution \cite{Mooers97,Shao:90}. 

Two of the most popular stochastic models of evolutionary tree growth are the Yule and the uniform models. The \emph{Yule}, or  \emph{Equal-Rate Markov} model \cite{Harding71,Yule}, starts with a single node and, at every step, a leaf is chosen randomly and uniformly, and it is replaced by a \emph{cherry}, i.e., a phylogenetic tree consisting only of a root and two leaves. Finally, once the desired number of leaves is reached, the labels are assigned randomly and uniformly to the leaves. This corresponds to a model  of evolution where, at each step, each currently extant species can give rise  with the same probability to two new species. Under this model different trees with the same number of leaves may have different probabilities. In contrast, the main feature of the \emph{uniform}, or \emph{Proportional to Distinguishable Arrangements}  model \cite{Rosen78} is that all phylogenetic trees with the same number of leaves have the same probability. From the point of view of tree growth \cite{CS,cherries}, this corresponds to a process where, starting with a node labeled 1, at  the $k$-th step a new pendant arc, ending in the leaf labeled $k+1$, is added either to a new root or to some edge (being all possible locations of this new pendant arc equiprobable). Notice that this is not an explicit model of evolution, only of tree growth. Several properties of the distributions of Sackin's  and Colless'  indices have been studied in the literature under these models \cite{BF:05,BFJ:06,Heard92,KiSl:93,Mul11,Rogers:93,Rogers:94,Rogers:96,SM01}.

In this paper we propose a new balance index, the \emph{total cophenetic index}. It is defined as the sum of the cophenetic values \cite{Sokal:62} of all  pairs of different leaves.  The main features of our index are that, unlike Colless' index, it makes sense for arbitrary (i.e., not necessarily fully resolved) trees; as Colless' and Sackin's indices, it can be easily computed in linear time; its range of values is  larger than Colless' and Sackin's (up to $O(n^3)$, instead of $O(n^2)$), and it has a greater resolution power than those indices. 

We compute the maximum and minimum values of our index, both in the arbitrary and the binary cases, and explicit formulas for its average value under the Yule and the uniform models for binary trees. We actually deduce its average value under the uniform model from an explicit formula for the average value of the Sackin index. This average value was known until now only for its limit distribution \cite{BFJ:06}, and our formula seems thus to be new in the literature.

The rest of this paper is organized as follows. In a first section we introduce the basic notations and facts on phylogenetic trees that will be used henceforth, and we recall some basic facts on the Sackin and the Colless indices. Then, in Section 3,  we define our total cophenetic index $\Phi$ and we establish its basic properties. In Section 4 we compute its maximum and minimum values, and then, in subsequent sections, we compute its expected value under the Yule and the uniform models. We finally devote a last section to conclusions and the discussion of two preliminary numerical experiments involving $\Phi$.

\section{Preliminaries}
\label{sec:prel}

\subsection{Phylogenetic trees}

In this paper, by a  \emph{phylogenetic tree} on a set $S$ of taxa we mean a  rooted tree  with its leaves bijectively labeled in  the set $S$.  To simplify the language, we shall always identify a leaf of a phylogenetic tree with its label.  We shall use the term \emph{phylogenetic tree with $n$ leaves} to refer to a phylogenetic tree on the set $\{1,\ldots,n\}$.  We shall denote by $L(T)$ the set of leaves of a phylogenetic tree $T$ and by $V_{int}(T)$ its set of internal nodes.

A phylogenetic tree is \emph{binary}, or \emph{fully resolved}, when all its internal nodes are \emph{bifurcating}, that is, when every internal node has exactly two children.

Whenever there exists a path from $u$ to $v$ in a phylogenetic tree $T$, we shall say that $v$ is a  \emph{descendant} of $u$ and also that $u$ is an  \emph{ancestor} of $v$.  The \emph{cluster} of a node $v$ in $T$ is the set $C_T(v)$ of its descendant leaves, an we shall denote by $\kappa_T(v)$ the cardinal $|C_T(v)|$, that is, the number of descendant leaves of $v$.

Given a node $v$ of a phylogenetic tree $T$, the \emph{subtree of $T$ rooted at $v$} is the subgraph of $T$ induced on the set of descendants of $v$. It is a phylogenetic tree on $C_T(v)$ with root this node $v$.

The \emph{lowest common ancestor} (LCA) of a pair of nodes $u,v$ of a phylogenetic tree $T$, in symbols $LCA_T(u,v)$, is the unique common ancestor of them that is a descendant of every other common ancestor of them.  

The \emph{depth} $\delta_T(v)$ of a node $v$  in a phylogenetic tree $T$ is the length (in number of arcs) of the unique path from the root $r$ to $v$. 

A \emph{rooted caterpillar} is a binary phylogenetic tree  all whose internal nodes have a leaf child: see Fig. \ref{fig:exs}.(a). A \emph{rooted  star} is a phylogenetic tree   such that all its leaves have depth 1: see Fig. \ref{fig:exs}.(b).

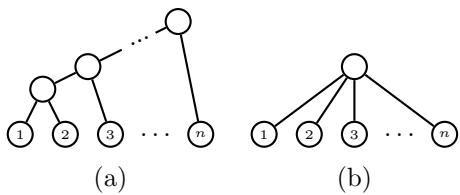
\begin{figure}[htb]
\begin{center}
\begin{tikzpicture}[thick,>=stealth,scale=0.3]
\draw(0,0) node [tre] (1) {};  \etq 1
\draw(2,0) node [tre] (2) {};  \etq 2
\draw(4,0) node [tre] (3) {};  \etq 3
\draw(6,0) node  {$\ldots$};  
\draw(8,0) node [tre] (n) {};  \etq n
\draw(1,2) node[tre] (a) {};
\draw(3,3) node[tre] (b) {};
\draw(7,5) node[tre] (r) {};
\draw(5,4) node  {.};
\draw(5.3,4.15) node  {.};
\draw(5.6,4.3) node  {.};
\draw (a)--(1);
\draw (a)--(2);
\draw (b)--(3);
\draw (b)--(a);
\draw (b)-- (4.5,3.75);
\draw (r)-- (6,4.5);
\draw (r)-- (n);
\draw(4,-2) node {(a)};
\end{tikzpicture}
\quad
\begin{tikzpicture}[thick,>=stealth,scale=0.3]
\draw(0,0) node [tre] (1) {};  \etq 1
\draw(2,0) node [tre] (2) {};  \etq 2
\draw(4,0) node [tre] (3) {};  \etq 3
\draw(6,0) node  {$\ldots$};  
\draw(8,0) node [tre] (n) {};  \etq n
\draw(4,3) node[tre] (r) {};
\draw  (r)--(1);
\draw  (r)--(2);
\draw  (r)--(3);
\draw  (r)--(n);
\draw(4,-2) node {(b)};
\end{tikzpicture}
\end{center}
\caption{\label{fig:exs} 
(a) A rooted caterpillar with $n$ leaves. (b) The rooted star  with $n$ leaves.}
\end{figure}

Let $T$ be a binary phylogenetic tree. For every $v\in V_{int}(T)$, say with children $v_1,v_2$,  the \emph{balance value} of $v$ is $bal_T(v)=|\kappa_T(v_1)-\kappa_T(v_2)|$. An internal node $v$ of  $T$ is \emph{balanced} when $bal_T(v)\leq 1$. So, a node $v$  with children $v_1$ and $v_2$ is  balanced if, and only if, $\{\kappa_T(v_1),\kappa_T(v_2)\}=\{\lfloor \kappa_T(v)/2\rfloor,\lceil \kappa_T(v)/2\rceil\}$.

We shall say that a binary phylogenetic tree $T$ is \emph{maximally balanced} when all its internal nodes are balanced. Recurrently, a binary phylogenetic tree is maximally balanced when its root is balanced and both subtrees rooted at the children of the root are maximally balanced. Notice that, for any number $n$ of nodes, the topology of a maximally balanced tree with $n$ leaves is fixed, and therefore two maximally balanced trees with the same number of leaves differ only in their labeling. Fig. \ref{fig:bal} depicts the maximally balanced trees with $n=2,\ldots,6$ leaves, up to relabelings.

 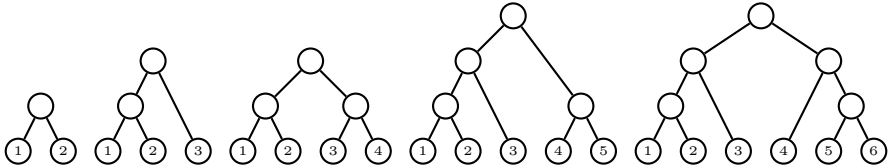
\begin{figure}[htb]
\begin{center}
\begin{tikzpicture}[thick,>=stealth,scale=0.3]
\draw(0,0) node [tre] (1) {};  \etq 1
\draw(2,0) node [tre] (2) {};  \etq 2
\draw(1,2) node[tre] (a) {};
\draw  (a)--(1);
\draw  (a)--(2);
\end{tikzpicture}
\ 
\begin{tikzpicture}[thick,>=stealth,scale=0.3]
\draw(0,0) node [tre] (1) {};  \etq 1
\draw(2,0) node [tre] (2) {};  \etq 2
\draw(4,0) node [tre] (3) {};  \etq 3
\draw(1,2) node[tre] (a) {};
\draw(2,4) node[tre] (b) {};
\draw  (a)--(1);
\draw  (a)--(2);
\draw  (b)--(a);
\draw  (b)--(3);
\end{tikzpicture}
\ 
\begin{tikzpicture}[thick,>=stealth,scale=0.3]
\draw(0,0) node [tre] (1) {};  \etq 1
\draw(2,0) node [tre] (2) {};  \etq 2
\draw(4,0) node [tre] (3) {};  \etq 3
\draw(6,0) node [tre] (4) {};  \etq 4
\draw(1,2) node[tre] (a) {};
\draw(5,2) node[tre] (c) {};
\draw(3,4) node[tre] (b) {};
\draw  (a)--(1);
\draw  (a)--(2);
\draw  (b)--(a);
\draw  (b)--(c);
\draw  (c)--(3);
\draw  (c)--(4);
\end{tikzpicture}
\ 
\begin{tikzpicture}[thick,>=stealth,scale=0.3]
\draw(0,0) node [tre] (1) {};  \etq 1
\draw(2,0) node [tre] (2) {};  \etq 2
\draw(4,0) node [tre] (3) {};  \etq 3
\draw(6,0) node [tre] (4) {};  \etq 4
\draw(8,0) node [tre] (5) {};  \etq 5
\draw(1,2) node[tre] (a) {};
\draw(7,2) node[tre] (c) {};
\draw(2,4) node[tre] (b) {};
\draw(4,6) node[tre] (d) {};
\draw  (a)--(1);
\draw  (a)--(2);
\draw  (b)--(a);
\draw  (b)--(3);
\draw  (c)--(5);
\draw  (c)--(4);
\draw  (d)--(b);
\draw  (d)--(c);
\end{tikzpicture}
\ 
\begin{tikzpicture}[thick,>=stealth,scale=0.3]
\draw(0,0) node [tre] (1) {};  \etq 1
\draw(2,0) node [tre] (2) {};  \etq 2
\draw(4,0) node [tre] (3) {};  \etq 3
\draw(6,0) node [tre] (4) {};  \etq 4
\draw(8,0) node [tre] (5) {};  \etq 5
\draw(10,0) node [tre] (6) {};  \etq 6
\draw(1,2) node[tre] (a) {};
\draw(9,2) node[tre] (c) {};
\draw(2,4) node[tre] (b) {};
\draw(8,4) node[tre] (d) {};
\draw(5,6) node[tre] (r) {};
\draw  (a)--(1);
\draw  (a)--(2);
\draw  (b)--(a);
\draw  (b)--(3);
\draw  (c)--(5);
\draw  (c)--(6);
\draw  (d)--(4);
\draw  (d)--(c);
\draw  (r)--(b);
\draw  (r)--(d);
\end{tikzpicture}
\end{center}
\caption{\label{fig:bal} 
Maximally balanced trees.}
\end{figure}

Let $\TT_n$ (resp., $\TB_n$) be the set of isomorphism classes of phylogenetic trees (resp, binary phylogenetic trees)  with $n$ leaves.  It is well known \cite[Ch. 3]{fel:04} that $|\TB_1|=1$ and, for every $n\geq 2$,
$$
|\TB_n|=(2n-3)!!=(2n-3)(2n-5)\cdots 3 \cdot 1.
$$
No  closed formula is known for the cardinal $|\TT_n|$, only recurrences or generating functions (see again \cite[Ch. 3]{fel:04} and the references therein).

An \emph{ordered $m$-forest} on a set $S$  is an ordered sequence of $m$ phylogenetic trees $(T_1,T_2,\ldots,T_m)$, each $T_i$ on a set $S_i$ of taxa, such that these sets $S_i$ are pairwise disjoint and their union is $S$.  An ordered forest is \emph{binary} when it consists of binary trees.
Let $\FF_{m,n}$ (resp., $\FB_{m,n}$) be the set of isomorphism classes of ordered $m$-forests 
(resp., binary ordered $m$-forests)  on a set $S$ with $|S|=n$. It is known (see, for instance, \cite[Lem. 1]{MirR10}) that for every $n\geq m\geq 1$,
$$
|\FB_{m,n}|= \frac{(2n-m-1)!m}{(n-m)!2^{n-m}}.
$$
Again, no closed formula is known for $|\FF_{m,n}|$.

\subsection{Balance indices}

Several balance indices have been proposed so far in the literature \cite[p. 563]{fel:04}. The two most popular ones are  the \emph{Sackin index} \cite{Sackin:72} and the \emph{Colless index} \cite{Colless:82}.  The Sackin index of a phylogenetic tree $T\in \TT_n$ is defined as the sum of the depths of its leaves:
$$
S(T)=\sum_{i=1}^n\delta_T(i).
$$
Alternatively \cite{BF:05},
$$
S(T)=\sum_{v\in V_{int}(T)} \kappa_T(v).
$$
On the other hand, the Colless index of a \emph{binary} phylogenetic tree  $T$ is defined as
$$
C(T)=\sum_{v\in V_{int}(T)} bal_T(v).
$$
This Colless index has been extended to non-binary trees by defining $bal_T(v)=0$ for every non-bifurcating internal node \cite{Shao:90}.

It is straightforward to notice that these two indices depend only on the topology of the tree, and they are invariant under isomorphisms and relabelings of leaves. This is desirable in a balance index, because the  degree of symmetry of a tree depends only on its shape.

Both Sackin's and Colless's indices reach their maximum value exactly at caterpillars, which are clearly the more imbalanced trees, and they reach their minimum on $\TB_n$ at the maximally balanced trees \cite{Heard92,Shao:90}. In both cases, the maximum value is in $O(n^2)$.
But they may also reach their minimum on $\TB_n$ at other trees.  For instance, for $n=6$, both indices take their minimum value at the two trees $T,T'$ depicted in Fig. \ref{fig:minind}. 
$T'$ is maximally balanced, but $T$ is not so. Actually, it is easy to check that Sackin's index is invariant under interchanges of cousins, which may produce trees with different degrees of symmetry but the same Sackin index. 

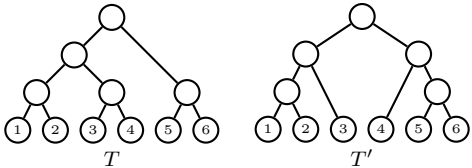
\begin{figure}[htb]
\begin{center}
\begin{tikzpicture}[thick,>=stealth,scale=0.25]
\draw(0,0) node[tre] (1) {};   \etq 1
\draw(2,0) node[tre] (2) {};   \etq 2
\draw(4,0) node[tre] (3) {};   \etq 3
\draw(6,0) node[tre] (4) {};   \etq 4
\draw(8,0) node[tre] (5) {};   \etq 5
\draw(10,0) node[tre] (6) {};   \etq 6
\draw(1,2) node[tre] (a) {};   
\draw(5,2) node[tre] (b) {};   
\draw(9,2) node[tre] (c) {};   
\draw(3,4) node[tre] (d) {};   
\draw(5,6) node[tre] (r) {};   
\draw (r)--(d);
\draw (r)--(c);
\draw (d)--(a);
\draw (d)--(b);
\draw (a)--(1);
\draw (a)--(2);
\draw (b)--(3);
\draw (b)--(4);
\draw (c)--(5);
\draw (c)--(6);
\draw(5,-1.5) node  {\footnotesize $T$};
\end{tikzpicture}
\quad
\begin{tikzpicture}[thick,>=stealth,scale=0.25]
\draw(0,0) node[tre] (1) {};   \etq 1
\draw(2,0) node[tre] (2) {};   \etq 2
\draw(4,0) node[tre] (3) {};   \etq 3
\draw(6,0) node[tre] (4) {};   \etq 4
\draw(8,0) node[tre] (5) {};   \etq 5
\draw(10,0) node[tre] (6) {};   \etq 6
\draw(1,2) node[tre] (a) {};   
\draw(2,4) node[tre] (b) {};   
\draw(9,2) node[tre] (c) {};   
\draw(8,4) node[tre] (d) {};   
\draw(5,6) node[tre] (r) {};   
\draw (r)--(d);
\draw (r)--(b);
\draw (b)--(a);
\draw (b)--(3);
\draw (a)--(1);
\draw (a)--(2);
\draw (d)--(c);
\draw (d)--(4);
\draw (c)--(5);
\draw (c)--(6);
\draw(5,-1.5) node  {\footnotesize $T'$};
\end{tikzpicture}
\end{center}
\caption{\label{fig:minind} 
Two trees having minimum Sackin's and Colless' indices on $\TB_6$: $S(T)=S(T')=16$ and $C(T)=C(T')=2$.}
\end{figure}

The main drawback with Colless' index is its difficult meaningful generalization to non-binary trees. Moreover, as Fig. \ref{fig:minind} shows, although not every interchange of cousins yields trees with the same Colless index, there are still interchanges of cousins that modify  the symmetry of the trees but  preserve this index.

The expected values  of these indices on $\BT_n$ have been studied under the Yule and the uniform models. Recall that, under the Yule model, different trees in $\BT_n$ may have different probabilities: namely, a tree $T$ with $n$ leaves has probability  \cite{Brown,SM01}
$$
P_Y(T)=\frac{2^{n-1}}{n!}\prod_{v\in V_{int}(T)}\frac{1}{\kappa_T(v)-1}.
$$
Under the uniform model, all trees in $\BT_n$ are equiprobable, and thus they have probability
$$
P_U(T)=\frac{1}{(2n-3)!!}.
$$

Let $C_n$ and $S_n$ be the random variables defined by choosing a tree $T\in \BT_n$ and computing $C(T)$ or $S(T)$, respectively. The following facts are known about the expected values of these random variables:
\begin{itemize}
\item Under the Yule model,
\begin{itemize}
\item  $E_Y(C_n)=(n\hspace*{-1ex} \mod 2)+n\sum\limits _{j=2}^{\lfloor n/2\rfloor} 1/j$  \cite{Heard92}.

\item  $E_Y(S_n)=2n\sum\limits_{j=2}^{n}1/j$ \cite{KiSl:93}.
\end{itemize}

\item Under the uniform model,
$$
E_U(C_n), E_U(S_n)\sim \sqrt{\pi}n^{3/2}\quad \mbox{\cite{BFJ:06}}.
$$
We shall actually prove in this paper (see Theorem \ref{th:sackin}) that 
$$
E_U(S_n)=\frac{n }{2n-3}\, {}_3 F_2
\bigg(\begin{array}{l} 2,\ 2,\ 2-n \\[-0.5ex] 1,\  4-2n\end{array};2\bigg),
$$
 where
${}_3 F_2$ is a \emph{hypergeometric function}  \cite{Bayley}.
\end{itemize}

\section{The total cophenetic index}
For every pair of leaves $i,j$ in a phylogenetic tree $T$, their \emph{cophenetic value} \cite{Sokal:62} is the depth of their least common ancestor:
$$
\varphi_T(i,j)=\delta_T(LCA_T(i,j)).
$$ 

\begin{definition}
The \emph{total cophenetic index} of a phylogenetic tree $T\in\TT_n$ is the sum of the cophenetic values of its pairs of different leaves:
$$
 \Phi(T)=\sum_{1\leq i<j\leq n} \varphi_T(i,j).
 $$
\end{definition}

This index can be seen as an extension of Sackin's: instead of adding up the depths of the leaves (that is, the depths of the LCA of every leaf and itself), $\Phi(T)$ adds up the depths of the LCA of every pair of leaves in $T$.  Notice also that, as Sackin's and Colless' indices, $\Phi(T)$ only depends on the topology of $T$, and in particular it is invariant under permutations of its labels.

Fig. \ref{fig:ex1} shows all possible topologies of phylogenetic trees with 5 leaves, and their  total cophenetic indices. Although we shall return on it later for trees with an arbitrary number $n$ of leaves, notice that the rooted star has the smallest total cophenetic value, 0;  the binary tree with the smallest total cophenetic value is the maximally balanced;  and the tree with the largest  total cophenetic value is the caterpillar.

\begin{figure}[htb]
\begin{center}
\begin{tikzpicture}[thick,>=stealth,scale=0.25]
\draw(0,0) node[tre] (1) {}; \etq 1
\draw(2,0) node[tre] (2) {}; \etq 2
\draw(4,0) node[tre] (3) {}; \etq 3
\draw(6,0) node[tre] (4) {}; \etq 4
\draw(8,0) node[tre] (5) {}; \etq 5
\draw(4,5) node[tre] (r) {}; 
\draw (r)--(1);
\draw (r)--(2);
\draw (r)--(3);
\draw (r)--(4);
\draw (r)--(5);
\draw(3,-2) node {\footnotesize $\Phi(T)=0$};
\end{tikzpicture}
\
\begin{tikzpicture}[thick,>=stealth,scale=0.25]
\draw(0,0) node[tre] (1) {}; \etq 1
\draw(2,0) node[tre] (2) {}; \etq 2
\draw(4,0) node[tre] (3) {}; \etq 3
\draw(6,0) node[tre] (4) {}; \etq 4
\draw(8,0) node[tre] (5) {}; \etq 5
\draw(1,3) node[tre] (z1) {}; 
\draw(3,5) node[tre] (r) {}; 
\draw (z1)--(1);
\draw (z1)--(2);
\draw (r)--(3);
\draw (r)--(4);
\draw (r)--(z1);
\draw (r)--(5);
\draw(3,-2) node {\footnotesize $\Phi(T)=1$};
\end{tikzpicture}
\
\begin{tikzpicture}[thick,>=stealth,scale=0.25]
\draw(0,0) node[tre] (1) {}; \etq 1
\draw(2,0) node[tre] (2) {}; \etq 2
\draw(4,0) node[tre] (3) {}; \etq 3
\draw(6,0) node[tre] (4) {}; \etq 4
\draw(8,0) node[tre] (5) {}; \etq 5
\draw(1,2) node[tre] (z1) {}; 
\draw(5,2) node[tre] (z2) {}; 
\draw(4.5,5) node[tre] (r) {}; 
\draw (z1)--(1);
\draw (z1)--(2);
\draw (z2)--(3);
\draw (z2)--(4);
\draw (r)--(z1);
\draw (r)--(z2);
\draw (r)--(5);
\draw(3,-2) node {\footnotesize $\Phi(T)=2$};
\end{tikzpicture}
\
\begin{tikzpicture}[thick,>=stealth,scale=0.25]
\draw(0,0) node[tre] (1) {}; \etq 1
\draw(2,0) node[tre] (2) {}; \etq 2
\draw(4,0) node[tre] (3) {}; \etq 3
\draw(6,0) node[tre] (4) {}; \etq 4
\draw(8,0) node[tre] (5) {}; \etq 5
\draw(2,2) node[tre] (z1) {}; 
\draw(4.5,5) node[tre] (r) {}; 
\draw (z1)--(1);
\draw (z1)--(2);
\draw (z1)--(3);
\draw (r)--(4);
\draw (r)--(z1);
\draw (r)--(5);
\draw(3,-2) node {\footnotesize $\Phi(T)=3$};
\end{tikzpicture}
\
\begin{tikzpicture}[thick,>=stealth,scale=0.25]
\draw(0,0) node[tre] (1) {}; \etq 1
\draw(2,0) node[tre] (2) {}; \etq 2
\draw(4,0) node[tre] (3) {}; \etq 3
\draw(6,0) node[tre] (4) {}; \etq 4
\draw(8,0) node[tre] (5) {}; \etq 5
\draw(1,2) node[tre] (z1) {}; 
\draw(2.5,4) node[tre] (z2) {}; 
\draw(5,6) node[tre] (r) {}; 
\draw (z1)--(1);
\draw (z1)--(2);
\draw (z2)--(3);
\draw (r)--(4);
\draw (r)--(z2);
\draw (z2)--(z1);
\draw (r)--(5);
\draw(3,-2) node {\footnotesize $\Phi(T)=4$};
\end{tikzpicture}
\
\begin{tikzpicture}[thick,>=stealth,scale=0.25]
\draw(0,0) node[tre] (1) {}; \etq 1
\draw(2,0) node[tre] (2) {}; \etq 2
\draw(4,0) node[tre] (3) {}; \etq 3
\draw(6,0) node[tre] (4) {}; \etq 4
\draw(8,0) node[tre] (5) {}; \etq 5
\draw(2,3) node[tre] (z1) {}; 
\draw(4.5,6) node[tre] (r) {}; 
\draw(7,3) node[tre] (z2) {}; 
\draw (z1)--(1);
\draw (z1)--(2);
\draw (z1)--(3);
\draw (z2)--(4);
\draw (r)--(z1);
\draw (r)--(z2);
\draw (z2)--(5);
\draw(3,-2) node {\footnotesize $\Phi(T)=4$};
\end{tikzpicture}
\
\begin{tikzpicture}[thick,>=stealth,scale=0.25]
\draw(0,0) node[tre] (1) {}; \etq 1
\draw(2,0) node[tre] (2) {}; \etq 2
\draw(4,0) node[tre] (3) {}; \etq 3
\draw(6,0) node[tre] (4) {}; \etq 4
\draw(8,0) node[tre] (5) {}; \etq 5
\draw(1,2) node[tre] (z1) {}; 
\draw(2.5,4) node[tre] (z2) {}; 
\draw(5,6) node[tre] (r) {}; 
\draw(7,3) node[tre] (z3) {}; 
\draw (z1)--(1);
\draw (z1)--(2);
\draw (z2)--(3);
\draw (z3)--(4);
\draw (r)--(z3);
\draw (z2)--(z1);
\draw (r)--(z2);
\draw (z3)--(5);
\draw(3,-2) node {\footnotesize $\Phi(T)=5$};
\end{tikzpicture}
\
\begin{tikzpicture}[thick,>=stealth,scale=0.25]
\draw(0,0) node[tre] (1) {}; \etq 1
\draw(2,0) node[tre] (2) {}; \etq 2
\draw(4,0) node[tre] (3) {}; \etq 3
\draw(6,0) node[tre] (4) {}; \etq 4
\draw(8,0) node[tre] (5) {}; \etq 5
\draw(4,6) node[tre] (r) {}; 
\draw(3,3) node[tre] (z) {}; 
\draw (z)--(1);
\draw (z)--(2);
\draw (z)--(3);
\draw (z)--(4);
\draw (r)--(z);
\draw (r)--(5);

\draw(3,-2) node {\footnotesize $\Phi(T)=6$};
\end{tikzpicture}
\
\begin{tikzpicture}[thick,>=stealth,scale=0.25]
\draw(0,0) node[tre] (1) {}; \etq 1
\draw(2,0) node[tre] (2) {}; \etq 2
\draw(4,0) node[tre] (3) {}; \etq 3
\draw(6,0) node[tre] (4) {}; \etq 4
\draw(1,2) node[tre] (z) {}; 
\draw(2.5,4) node[tre] (r) {}; 
\draw(8,0) node[tre] (5) {}; \etq 5
\draw(4,6) node[tre] (r0) {}; 
\draw (z)--(1);
\draw (z)--(2);
\draw (r)--(3);
\draw (r)--(4);
\draw (r)--(z);
\draw (r0)--(r);
\draw (r0)--(5);
\draw(3,-2) node {\footnotesize $\Phi(T)=7$};
\end{tikzpicture}
\
\begin{tikzpicture}[thick,>=stealth,scale=0.25]
\draw(0,0) node[tre] (1) {}; \etq 1
\draw(2,0) node[tre] (2) {}; \etq 2
\draw(4,0) node[tre] (3) {}; \etq 3
\draw(6,0) node[tre] (4) {}; \etq 4
\draw(1,2) node[tre] (z) {}; 
\draw(5,2) node[tre] (x) {}; 
\draw(2.5,4) node[tre] (r) {}; 
\draw(8,0) node[tre] (5) {}; \etq 5
\draw(4,6) node[tre] (r0) {}; 
\draw (z)--(1);
\draw (z)--(2);
\draw (x)--(4);
\draw (x)--(3);
\draw (r)--(z);
\draw (r)--(x);
\draw (r0)--(r);
\draw (r0)--(5);
\draw(3,-2) node {\footnotesize $\Phi(T)=8$};
\end{tikzpicture}
\
\begin{tikzpicture}[thick,>=stealth,scale=0.25]
\draw(0,0) node[tre] (1) {}; \etq 1
\draw(2,0) node[tre] (2) {}; \etq 2
\draw(4,0) node[tre] (3) {}; \etq 3
\draw(6,0) node[tre] (4) {}; \etq 4
\draw(2,2) node[tre] (z) {}; 
\draw(3,4) node[tre] (r) {}; 
\draw(8,0) node[tre] (5) {}; \etq 5
\draw(4,6) node[tre] (r0) {}; 
\draw (z)--(1);
\draw (z)--(2);
\draw (z)--(3);
\draw (r)--(4);
\draw (r)--(z);
\draw (r0)--(r);
\draw (r0)--(5);
\draw(3,-2) node {\footnotesize $\Phi(T)=9$};
\end{tikzpicture}
\
\begin{tikzpicture}[thick,>=stealth,scale=0.25]
\draw(0,0) node[tre] (1) {}; \etq 1
\draw(2,0) node[tre] (2) {}; \etq 2
\draw(4,0) node[tre] (3) {}; \etq 3
\draw(6,0) node[tre] (4) {}; \etq 4
\draw(1,2) node[tre] (z) {}; 
\draw(2.5,3.33) node[tre] (x) {}; 
\draw(4,4.66) node[tre] (r) {}; 
\draw(8,0) node[tre] (5) {}; \etq 5
\draw(5.5,6) node[tre] (r0) {}; 
\draw (z)--(1);
\draw (z)--(2);
\draw (x)--(z);
\draw (x)--(3);
\draw (r)--(4);
\draw (r)--(x);
\draw (r0)--(r);
\draw (r0)--(5);
\draw(3,-2) node {\footnotesize $\Phi(T)=10$};
\end{tikzpicture}
\end{center}
\caption{\label{fig:ex1} All phylogenetic trees with 5 leaves, up to relabelings, and their total cophenetic index.}
\end{figure}
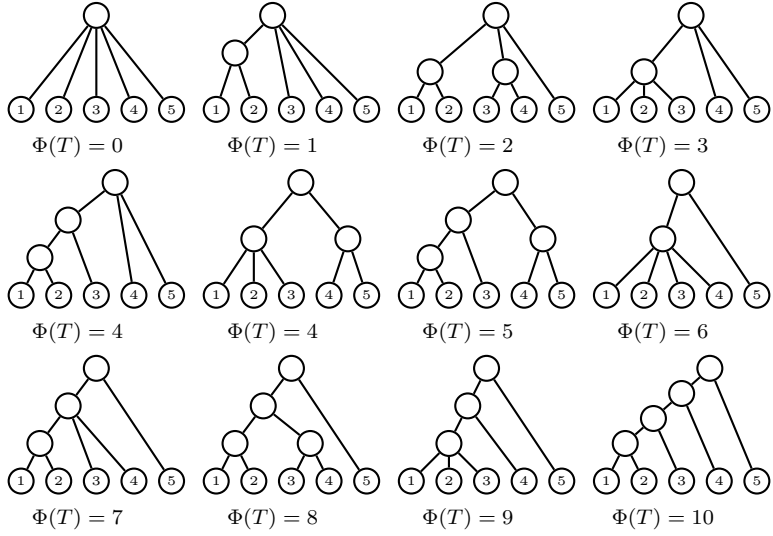

The following alternative expression for $\Phi(T)$ will be useful in many proofs.

\begin{lemma}\label{lem:alt1}
Let $T\in \TT_n$ be a phylogenetic tree with root $r$. Then,
$$
\Phi(T)=\sum_{v\in V_{int}(T)-\{r\}} \binom{\kappa_T(v)}{2}.
$$
\end{lemma}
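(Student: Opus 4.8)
The plan is to prove the identity by a double-counting argument that reorganizes the defining sum $\Phi(T)=\sum_{i<j}\varphi_T(i,j)$ according to which internal node serves as the least common ancestor of each pair. The key observation is that every pair of distinct leaves $\{i,j\}$ has a well-defined $\mathrm{LCA}_T(i,j)$, which is some internal node $v$, and conversely each internal node $v$ is the LCA of a specific collection of leaf pairs. So I would group the terms of $\Phi(T)$ by the value $v=\mathrm{LCA}_T(i,j)$ and write
\begin{equation*}
\Phi(T)=\sum_{1\leq i<j\leq n}\delta_T(\mathrm{LCA}_T(i,j))
=\sum_{v\in V_{int}(T)}\ \delta_T(v)\cdot\#\{\{i,j\}\mid \mathrm{LCA}_T(i,j)=v\}.
\end{equation*}

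The main step is then to count, for a fixed internal node $v$, the number of unordered leaf pairs whose LCA is exactly $v$. If $v$ has children $w_1,\ldots,w_k$, a pair $\{i,j\}$ has $\mathrm{LCA}_T(i,j)=v$ precisely when the two leaves lie in the clusters of two \emph{different} children of $v$; pairs lying inside the cluster of a single child have their LCA strictly below $v$. Hence this count equals $\sum_{a<b}\kappa_T(w_a)\kappa_T(w_b)$. Using the standard identity $\binom{\kappa_T(v)}{2}=\binom{\sum_a \kappa_T(w_a)}{2}=\sum_{a<b}\kappa_T(w_a)\kappa_T(w_b)+\sum_a\binom{\kappa_T(w_a)}{2}$, this per-node count is exactly the ``new'' pairs first separated at $v$. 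I expect the cleanest route is to avoid carrying this count explicitly and instead argue by induction on the tree structure, or equivalently to reindex by the depth telescoping described below.

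The slicker and more direct argument, which I would actually write out, rewrites each depth as a sum over the arcs on the root-to-$v$ path. Since $\delta_T(\mathrm{LCA}_T(i,j))$ equals the number of internal nodes $u\neq r$ that are ancestors of (or equal to) $\mathrm{LCA}_T(i,j)$, and such $u$ is a proper ancestor of the LCA exactly when both $i$ and $j$ lie in the cluster $C_T(u)$, I would interchange the order of summation to obtain
\begin{equation*}
\Phi(T)=\sum_{1\leq i<j\leq n}\ \sum_{\substack{u\in V_{int}(T)-\{r\}\\ i,j\in C_T(u)}} 1
=\sum_{u\in V_{int}(T)-\{r\}}\ \#\{\{i,j\}\mid i,j\in C_T(u)\}
=\sum_{u\in V_{int}(T)-\{r\}}\binom{\kappa_T(u)}{2}.
\end{equation*}
Here the crucial equivalence is that an internal node $u\neq r$ satisfies $u\preceq \mathrm{LCA}_T(i,j)$ (i.e.\ $u$ is an ancestor of the LCA, counted among the arcs contributing to its depth) if and only if $u$ is a common ancestor of $i$ and $j$, which holds exactly when both leaves belong to $C_T(u)$.

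The main obstacle is getting this depth-to-arc bookkeeping exactly right, so that each internal node on the path from $r$ down to $\mathrm{LCA}_T(i,j)$ contributes precisely once to $\delta_T(\mathrm{LCA}_T(i,j))$ and the root is correctly excluded. Since the depth counts arcs rather than nodes, I must verify that the internal nodes $u\neq r$ with $i,j\in C_T(u)$ are exactly the $\delta_T(\mathrm{LCA}_T(i,j))$ nodes lying strictly between the root and the LCA together with the LCA itself, with the root omitted because it contributes $0$ to the depth and trivially contains every leaf in its cluster. Checking this correspondence carefully — and confirming the boundary case where a leaf pair has its LCA at a child of the root — is where the real care is needed; the combinatorial count $\binom{\kappa_T(u)}{2}$ of leaf pairs within a cluster is then immediate.
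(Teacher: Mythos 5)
Your proof is correct and takes essentially the same route as the paper's: the argument you say you would actually write out---that $\delta_T(LCA_T(i,j))$ equals the number of non-root internal nodes $u$ with $i,j\in C_T(u)$, followed by an interchange of the two summations---is precisely the paper's proof via the indicator $\gamma_v(i,j)$. The preliminary grouping-by-LCA discussion is harmless extra motivation, and your bookkeeping (nodes strictly between $r$ and the LCA plus the LCA itself, root excluded) is exactly right.
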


\begin{proof}
For every $v\in V_{int}(T)-\{r\}$ and for every $i,j\in L(T)$, let
$$
\gamma_{v}(i,j)=\left\{\begin{array}{ll}
1 & \mbox{ if $i,j\in C_T(v)$}\\
0 & \mbox{ otherwise}
\end{array}\right.
$$
Then, $\phi_T(i,j)=\sum\limits_{v\in V_{int}(T)-\{r\}} \gamma_{v}(i,j)$ and thus
$$
\begin{array}{rl}
\Phi(T)& \displaystyle =\sum_{1\leq i<j\leq n}\sum_{V_{int}(T)-\{r\}} \gamma_{v}(i,j)
=\sum_{v\in V_{int}(T)-\{r\}}\sum_{1\leq i<j\leq n} \gamma_{v}(i,j)\\[1ex] & \displaystyle
=\sum_{v\in V_{int}(T)-\{r\}}\binom{|C_T(v)|}{2}. 
\end{array}
$$
\end{proof}

\begin{corollary}
For every $T\in \TT_n$, $\Phi(T)$ can be computed in time $O(n)$.
\end{corollary}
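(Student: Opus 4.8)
The plan is to compute $\Phi(T)$ through the alternative expression established in Lemma~\ref{lem:alt1}, namely $\Phi(T)=\sum_{v\in V_{int}(T)-\{r\}}\binom{\kappa_T(v)}{2}$. This reduces the task to two subtasks: first, obtaining the cluster size $\kappa_T(v)$ of every internal node $v$; and second, adding up the binomial coefficients $\binom{\kappa_T(v)}{2}$ over all internal nodes other than the root. Once the $\kappa_T(v)$ are known, each term costs constant time and there are $O(n)$ of them, so the summation is clearly $O(n)$; the real content is the first subtask.

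For that, I would perform a single post-order (bottom-up) traversal of $T$, using the recurrence $\kappa_T(v)=1$ when $v$ is a leaf, and $\kappa_T(v)=\sum_c\kappa_T(c)$, with $c$ ranging over the children of $v$, when $v$ is internal. Each node is visited exactly once, and the arithmetic performed at a node is proportional to its number of children, so the total cost of the traversal is proportional to the number of arcs of $T$. To turn this into an $O(n)$ bound I would then bound the size of $T$: since every internal node of a phylogenetic tree has at least two children, a tree with $n$ leaves has at most $n-1$ internal nodes, hence at most $2n-1$ nodes and $2n-2$ arcs in total, both of which are $O(n)$. Therefore the traversal, and with it the whole computation, runs in $O(n)$ time.

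I do not expect any genuine obstacle here, since the statement is essentially an algorithmic corollary of Lemma~\ref{lem:alt1}. The only point requiring a little care is the bound on the number of internal nodes in the \emph{arbitrary} (not necessarily binary) case; this follows from the elementary observation that every internal node is branching, so that the linear bound on the node and arc counts holds uniformly over $\TT_n$, and the argument is not restricted to binary trees.
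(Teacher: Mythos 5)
Your proposal is correct and follows essentially the same route as the paper: apply Lemma~\ref{lem:alt1} and compute the cluster sizes $\kappa_T(v)$ by a single post-order traversal, which the paper handles by citing a standard reference. Your extra observation that every internal node of a phylogenetic tree has at least two children (so an arbitrary tree in $\TT_n$ has $O(n)$ nodes and arcs) is a welcome detail the paper leaves implicit, and it correctly justifies the bound in the non-binary case.
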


\begin{proof}
The vector $(\kappa_T(v))_{v\in V_{int}(T)-\{r\}}$ can be computed in linear time by traversing in post order the tree $T$ \cite[\S 3.2]{Val:02}, and then, by the last lemma, $\Phi(T)$ is computed in linear time from this vector.
\end{proof}

\begin{lemma}\label{lem:rec}
Let $T\in \TT_n$ be a phylogenetic tree with root $r$, and let $T_1,\ldots,T_k$, $k\geq 2$, be the subtrees rooted at the children of $r$; cf. Fig \ref{fig:rec}. Then,
$$
\Phi(T)=\sum_{i=1}^k\Phi(T_i)+\sum_{i=1}^k\binom{|L(T_i)|}{2}.
$$
\end{lemma}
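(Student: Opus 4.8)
The plan is to reduce everything to the closed form of Lemma~\ref{lem:alt1}, which writes $\Phi$ as a sum of binomials $\binom{\kappa_T(v)}{2}$ over the non-root internal nodes. The first step is the structural observation that the internal nodes of $T$ split cleanly as the disjoint union
$$
V_{int}(T)-\{r\}=\bigcup_{i=1}^k V_{int}(T_i).
$$
Indeed, every node of $T$ other than $r$ lies in exactly one subtree $T_i$, and such a node is internal in $T$ if and only if it is internal in the $T_i$ containing it, since its set of children is the same in both trees. The second step is cluster invariance: for $v\in V_{int}(T_i)$ the set of descendant leaves of $v$ is identical whether computed in $T$ or in $T_i$, because the subtree rooted at $v$ is the same subgraph in either case; hence $\kappa_T(v)=\kappa_{T_i}(v)$. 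Combining these two facts with Lemma~\ref{lem:alt1} applied to $T$ gives
$$
\Phi(T)=\sum_{v\in V_{int}(T)-\{r\}}\binom{\kappa_T(v)}{2}=\sum_{i=1}^k\sum_{v\in V_{int}(T_i)}\binom{\kappa_{T_i}(v)}{2}.
$$

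The third step isolates, for each $i$, the root $r_i$ of $T_i$ from the inner sum. Since $\kappa_{T_i}(r_i)=|L(T_i)|$, peeling off the term $v=r_i$ and invoking Lemma~\ref{lem:alt1} for $T_i$ yields $\sum_{v\in V_{int}(T_i)}\binom{\kappa_{T_i}(v)}{2}=\Phi(T_i)+\binom{|L(T_i)|}{2}$; summing over $i$ then gives the claimed identity. The only point requiring a word of care is the degenerate case in which a child of $r$ is itself a leaf, so that the corresponding $T_i$ is a single node: then $V_{int}(T_i)=\emptyset$, $\Phi(T_i)=0$ and $\binom{|L(T_i)|}{2}=\binom{1}{2}=0$, and the identity of the third step holds trivially as $0=0+0$. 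No step presents a genuine obstacle; this is a bookkeeping argument, and the two things to get right are the disjointness of the partition of the internal nodes and the invariance of clusters under passing to a subtree.

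Alternatively, one can argue directly from the definition of $\Phi$ as a sum over pairs of leaves. A pair of leaves lying in two distinct subtrees has LCA equal to $r$, hence contributes depth $0$; a pair of leaves inside a single $T_i$ has the same LCA in $T$ and in $T_i$, but at depth one greater in $T$ because of the extra arc from $r$ to $r_i$, so $\varphi_T(i,j)=\varphi_{T_i}(i,j)+1$. Summing the correction $+1$ over the $\binom{|L(T_i)|}{2}$ pairs inside each $T_i$ recovers the second term. I would favour the route through Lemma~\ref{lem:alt1}, as it avoids having to track the shift in depths.
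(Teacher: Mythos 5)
Your proof is correct and follows essentially the same route as the paper's: apply Lemma~\ref{lem:alt1} to $T$, split the sum over $V_{int}(T)-\{r\}$ into the disjoint sets $V_{int}(T_i)$, and peel off the root term $\binom{|L(T_i)|}{2}$ before invoking Lemma~\ref{lem:alt1} again on each $T_i$. The details you make explicit (the disjoint partition of internal nodes, cluster invariance $\kappa_T(v)=\kappa_{T_i}(v)$, and the degenerate single-leaf case) are left implicit in the paper, and your alternative pair-counting argument via $\varphi_T(i,j)=\varphi_{T_i}(i,j)+1$ is a valid variant, but the chosen argument is the same.
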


\begin{figure}[htb]
\begin{center}
\begin{tikzpicture}[thick,>=stealth,scale=0.25]
\draw(0,0) node[tre] (z1) {}; 
\draw (z1)--(-2,-3)--(2,-3)--(z1);
\draw(0,-2) node  {\footnotesize $T_1$};
\draw(5,0) node[tre] (z2) {}; 
\draw (z2)--(3,-3)--(7,-3)--(z2);
\draw(5,-2) node  {\footnotesize $T_2$};
\draw(8,-2.8) node {.}; 
\draw(8.4,-2.8) node {.}; 
\draw(8.8,-2.8) node {.}; 
\draw(11.5,0) node[tre] (z3) {}; 
\draw (z3)--(9.5,-3)--(14.5,-3)--(z3);
\draw(11.5,-2) node  {\footnotesize $T_k$};
\draw(7.25,4) node[tre] (z) {}; 
\draw (z)--(z1);
\draw (z)--(z2);
\draw (z)--(z3);
\end{tikzpicture}
\end{center}
\caption{\label{fig:rec}}
\end{figure}

\begin{proof}
Let $z_i$ be the root of $T_i$, $i=1,\ldots,k$, and $r$ the root of $T$. Then, by Lemma \ref{lem:alt1},
$$
\begin{array}{rl}
\Phi(T) & \displaystyle =\sum_{v\in V_{int}(T)-\{r\}} \binom{\kappa_T(v)}{2}
=\sum_{i=1}^{k}\sum_{v\in V_{int}(T_i)} \binom{\kappa_{T_i}(v)}{2}\\ & \displaystyle
=\sum_{i=1}^{k}\Big(\binom{\kappa_{T_i}(z_i)}{2}+\sum_{v\in V_{int}(T_i)-\{z_i\}} \binom{\kappa_{T_i}(v)}{2}\Big) \\ & \displaystyle
=\sum_{i=1}^{k}\Big( \binom{|L(T_i)|}{2}+ \Phi(T_i) \Big).
\end{array}
$$
\end{proof}

This shows that the total cophenetic index is a \emph{recursive tree shape statistic} in the sense of \cite{Matsen}.

Next lemma shows that the total cophenetic index is local, in the sense that if two trees differ only on a rooted subtree, then the difference between their total cophenetic values is equal to that of these subtrees. Sackin's and Colless' indices also satisfy this property.

\begin{lemma}\label{lem:basic}
Let $T_0$ and $T_0'$ be two phylogenetic trees with $L(T_0)=L(T_0')\subseteq \{1,\ldots,n\}$, let $T\in \TT_n$ be such that its subtree rooted at some node $z$ is $T_0$, and let $T'\in\TT_n$ be the tree obtained from $T$ by replacing $T_0$ by $T_0'$ as its subtree rooted at $z$. Then
$$
\Phi(T)-\Phi(T')=\Phi(T_0)-\Phi(T_0').
$$
\end{lemma}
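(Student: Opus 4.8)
The plan is to exploit the formula of Lemma \ref{lem:alt1}, which expresses $\Phi$ as a sum over non-root internal nodes of $\binom{\kappa(v)}{2}$. The key observation is that replacing the subtree $T_0$ rooted at $z$ by $T_0'$ changes the tree only \emph{inside} the region hanging below $z$: the two trees $T$ and $T'$ share the same set of nodes above and beside $z$, and each such node has the same cluster in both trees (since $L(T_0)=L(T_0')$, the set of leaves hanging below $z$ is identical in $T$ and $T'$). So the strategy is to partition $V_{int}(T)-\{r\}$ into the internal nodes lying strictly below $z$, the node $z$ itself, and the remaining internal nodes, and to argue that all contributions cancel in the difference $\Phi(T)-\Phi(T')$ except those coming from strictly below $z$.

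Concretely, I would first set $A=L(T_0)=L(T_0')$ and note that $z$ is an internal node of both $T$ and $T'$ with $\kappa_T(z)=\kappa_{T'}(z)=|A|$, and more generally that every internal node $v$ of $T$ that is not a proper descendant of $z$ satisfies $\kappa_T(v)=\kappa_{T'}(v)$; indeed such a $v$ has the same children and the same descendant leaves in both trees, because the surgery is confined to the subtree below $z$. Hence in the difference
\[
\Phi(T)-\Phi(T')=\!\!\sum_{v\in V_{int}(T)-\{r\}}\!\!\binom{\kappa_T(v)}{2}
-\!\!\sum_{v\in V_{int}(T')-\{r\}}\!\!\binom{\kappa_{T'}(v)}{2},
\]
every term indexed by a node that is not a proper descendant of $z$ (including $z$ itself, as long as $z\neq r$) cancels, leaving only
\[
\Phi(T)-\Phi(T')=\sum_{v\in V_{int}(T_0)-\{z\}}\!\!\binom{\kappa_{T_0}(v)}{2}
-\!\!\sum_{v\in V_{int}(T_0')-\{z\}}\!\!\binom{\kappa_{T_0'}(v)}{2}.
\]
The proper descendants of $z$ in $T$ are exactly the non-root internal nodes of $T_0$, and for such $v$ the cluster is computed entirely within $T_0$, so $\kappa_T(v)=\kappa_{T_0}(v)$; likewise $\kappa_{T'}(v)=\kappa_{T_0'}(v)$ for the internal nodes of $T_0'$. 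Applying Lemma \ref{lem:alt1} to $T_0$ and to $T_0'$ (with $z$ as their common root) identifies the two remaining sums as $\Phi(T_0)$ and $\Phi(T_0')$ respectively, yielding the claim.

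I expect the only delicate point to be bookkeeping rather than mathematical depth: one must verify carefully that the clusters of the ``outside'' nodes genuinely coincide, and treat the boundary case where $z$ is the root of $T$ (in which case $z$ simply is not summed over on either side, and the argument goes through unchanged). The identification $\kappa_T(v)=\kappa_{T_0}(v)$ for $v$ below $z$ should be stated explicitly, since depths of these nodes differ between $T$ and $T_0$ even though their clusters agree---it is precisely because Lemma \ref{lem:alt1} uses cluster sizes rather than depths that the locality property holds so cleanly.
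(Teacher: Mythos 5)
Your proof is correct, but it takes a genuinely different route from the paper's. The paper argues directly from the definition of $\Phi$ in terms of cophenetic values: setting $k=\delta_T(z)=\delta_{T'}(z)$, it observes that $\varphi_T(i,j)=k+\varphi_{T_0}(i,j)$ and $\varphi_{T'}(i,j)=k+\varphi_{T_0'}(i,j)$ for pairs of leaves inside $L(T_0)$, while $\varphi_T(i,j)=\varphi_{T'}(i,j)$ whenever at least one leaf lies outside; the constant $k$ cancels in the difference, and the claim follows by summing over pairs. You instead invoke the node-based expression of Lemma \ref{lem:alt1}, cancel the contributions $\binom{\kappa(v)}{2}$ of $z$ and of all internal nodes not below $z$ (whose clusters agree in $T$ and $T'$ because the surgery preserves the leaf set hanging below $z$), and identify the surviving sums with $\Phi(T_0)-\Phi(T_0')$ by applying Lemma \ref{lem:alt1} to $T_0$ and $T_0'$ with root $z$. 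Each approach buys something: yours makes the structural reason for locality explicit---Lemma \ref{lem:alt1} trades depths for cluster sizes, which do not change when the subtree is re-planted---and it treats the boundary case $z=r$ transparently ($z$ is simply absent from both sums), whereas in the paper's argument that case is silently covered by $k=0$; the paper's version, on the other hand, needs no partition of the node set and no matching of nodes between $T$ and $T'$, working verbatim from the definition of $\Phi$ without the auxiliary formula. One point you use implicitly and could flag: Lemma \ref{lem:alt1} is stated for trees in $\TT_n$, i.e.\ on the label set $\{1,\ldots,n\}$, but you apply it to $T_0,T_0'$ on an arbitrary subset of labels; this is harmless, since its proof never uses the label set, and the paper itself applies it in the same way (e.g.\ in the proofs of Lemmas \ref{lem:basic1} and \ref{lem:basic2}).
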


\begin{proof}
Without any loss of generality, assume that $L(T_0)=L(T_0')=\{1,\ldots,m\}$, with $m\leq n$. Let $k=\delta_T(z)=\delta_{T'}(z)$. Then, for every $i,j\leq m$,
$$
\varphi_T(i,j)=k+\varphi_{T_0}(i,j),\ 
\varphi_{T'}(i,j)=k+\varphi_{T'_0}(i,j).
$$
On the other hand, $\varphi_T(i,j)=\varphi_{T'}(i,j)$ if $i>m$ or $j>m$. Therefore
$$
\begin{array}{rl}
\Phi(T)-\Phi(T') & \displaystyle =
\sum_{1\leq i<j\leq m}(\varphi_T(i,j)-\varphi_{T'}(i,j))\\ & \displaystyle =
\sum_{1\leq i<j\leq m}(\varphi_{T_0}(i,j)-\varphi_{T_0'}(i,j))=
\Phi(T_0)-\Phi(T_0').
\end{array}
$$
\end{proof}

The \emph{nodal distance} $d_T(i,j)$ between a pair of leaves $i,j$ is the length of the unique undirected path connecting them; equivalently, it is the sum of the lengths of the paths from $LCA(i,j)$ to $i$ and $j$. The \emph{total area} \cite{MirR10} of a tree $T\in \TT_n$ is defined as
$$
D(T)=\sum_{1\leq i<j\leq n} d_T(i,j).
$$
There is an easy relation between $\Phi(T)$, $S(T)$ and  $D(T)$, which will be used several times in this paper. 

\begin{lemma}\label{lem:total}
For every $T\in \TT_n$,
$$
(n-1)S(T)=2\phi(T)+D(T).
$$
\end{lemma}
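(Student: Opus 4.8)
The plan is to reduce everything to a single geometric identity relating the nodal distance, the leaf depths, and the cophenetic value of a pair of leaves. The starting point is the observation already implicit in the definition of $d_T$: the path realizing the nodal distance between two distinct leaves $i,j$ splits at their lowest common ancestor, climbing from $i$ up to $LCA_T(i,j)$ and then descending to $j$. Since the length of the path from the root to $LCA_T(i,j)$ is $\delta_T(LCA_T(i,j))=\varphi_T(i,j)$, the two halves have lengths $\delta_T(i)-\varphi_T(i,j)$ and $\delta_T(j)-\varphi_T(i,j)$, so that
$$
d_T(i,j)=\delta_T(i)+\delta_T(j)-2\varphi_T(i,j)
$$
for every pair of distinct leaves $i,j$.

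First I would sum this identity over all pairs $1\leq i<j\leq n$. By the definitions of $D(T)$ and of the total cophenetic index, the left-hand sides assemble into $D(T)$, while the terms $-2\varphi_T(i,j)$ assemble into $-2\Phi(T)$. It then remains only to evaluate the contribution of the leaf depths, namely $\sum_{1\leq i<j\leq n}(\delta_T(i)+\delta_T(j))$.

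The key combinatorial step is to count, for a fixed leaf $i$, how often its depth $\delta_T(i)$ appears in this double sum. Each leaf $i$ is paired exactly once with each of the other $n-1$ leaves, so $\delta_T(i)$ is tallied $n-1$ times in total. Hence
$$
\sum_{1\leq i<j\leq n}(\delta_T(i)+\delta_T(j))=(n-1)\sum_{i=1}^n\delta_T(i)=(n-1)S(T),
$$
using the depth-sum expression for the Sackin index. Combining the three pieces yields $D(T)=(n-1)S(T)-2\Phi(T)$, which is the claimed identity after rearrangement.

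I do not expect any genuine obstacle here; the argument is a one-line path decomposition followed by a reindexing of a double sum. The only point requiring mild care is the multiplicity $n-1$: one must verify that the unordered pairs are counted correctly, avoiding an ordered-versus-unordered or off-by-one slip when passing from $\sum_{i<j}$ to $(n-1)\sum_i$. Everything else is the standard additive behaviour of tree paths at the lowest common ancestor.
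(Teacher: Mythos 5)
Your proof is correct and follows essentially the same route as the paper: both rest on the pointwise identity $\delta_T(i)+\delta_T(j)=d_T(i,j)+2\varphi_T(i,j)$ (which the paper declares ``straightforward to check'' and you justify via the path decomposition at the LCA), followed by the same summation over pairs and the multiplicity-$(n-1)$ reindexing. No gaps; if anything, you supply slightly more detail than the published argument.
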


\begin{proof}
It is straightforward to check that, for every $i,j\in L(T)$,
$$
\delta_T(i)+\delta_T(j)=d_T(i,j)+2\varphi_T(i,j).
$$
Therefore,
$$
\begin{array}{rl}
2\phi(T)+D(T) & \displaystyle =\sum_{1\leq i<j\leq n}(2\varphi_T(i,j)+d_T(i,j))
=\sum_{1\leq i<j\leq n}(\delta_T(i)+\delta_T(j))\\ & \displaystyle =(n-1)\sum_{i=1}^{n}\delta_T(i)=(n-1)S(T).
\end{array}
$$
\end{proof}

\section{Trees with maximum and minimum $\Phi$}

In this section we determine which  trees in $\TT_n$ and $\TB_n$ have the largest and smallest  total cophenetic indices. We begin by establishing two lemmas that will allow us to find the trees with the maximum $\Phi$ on $\TT_n$.

\begin{lemma}\label{lem:basic1}
Let $T_1, \ldots,T_k$, with $k\geq 3$, be an ordered forest on $\{1,\ldots,m\}$. Consider the trees $T_0,T_0'\in \TT_n$ described in Fig. \ref{fig:1}. Then, $\Phi(T_0')-\Phi(T_0)> 0$.
\end{lemma}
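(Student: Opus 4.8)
The plan is to localize the change to a single newly created internal node and then read off the difference directly from Lemma \ref{lem:alt1}. Although Fig.~\ref{fig:1} is not reproduced here, the context---this being the first step toward locating the maximum of $\Phi$ on $\TT_n$---makes clear what $T_0$ and $T_0'$ must be: $T_0$ is the tree in which a single node $r$ has the roots $z_1,\ldots,z_k$ of $T_1,\ldots,T_k$ as its $k$ children, while $T_0'$ is obtained from $T_0$ by \emph{resolving} this multifurcation one step, i.e.\ by inserting a new internal node $u$ that becomes the parent of at least two of the subtrees $T_i$ (thereby reducing the number of children of $r$ by at least one). Write $n_i=|L(T_i)|$, let $I\subseteq\{1,\ldots,k\}$ with $|I|\geq 2$ index the subtrees that come to hang below $u$ in $T_0'$, and set $s=\sum_{i\in I} n_i$.

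The key observation is that $T_0$ and $T_0'$ have exactly the same internal nodes except for the extra node $u$, and---this is the whole point---every internal node lying inside one of the $T_i$ keeps the very same cluster in both trees, since the surgery only rearranges how the $T_i$ are attached near $r$ and never touches the interior of any $T_i$. These common nodes are non-root internal nodes of both $T_0$ and $T_0'$, so in the sum of Lemma \ref{lem:alt1} all of their contributions $\binom{\kappa(\cdot)}{2}$ are identical and cancel upon subtraction, while the root $r$ is excluded from the sum in each tree. Hence the only surviving term is the one contributed by $u$, which in $T_0'$ is a non-root internal node with cluster $\bigcup_{i\in I} L(T_i)$ of cardinality $s$. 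Lemma \ref{lem:alt1} therefore yields
$$
\Phi(T_0')-\Phi(T_0)=\binom{s}{2}.
$$
The same identity can be obtained by applying the recursion of Lemma \ref{lem:rec} at $r$ in each tree and once more at $u$ in $T_0'$.

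It remains to check strict positivity, and this is the only place the hypothesis $k\geq 3$ enters in an essential way. Because $|I|\geq 2$ and each $T_i$ has at least one leaf, we have $s=\sum_{i\in I}n_i\geq 2$, so $\binom{s}{2}\geq\binom{2}{2}=1>0$, which is precisely the assertion $\Phi(T_0')-\Phi(T_0)>0$. (The same hypothesis $k\geq 3$ also guarantees that both $u$ and the modified $r$ remain legitimate branching internal nodes, so that $T_0'\in\TT_n$.) I do not expect a genuine obstacle here: once the transformation is read correctly from Fig.~\ref{fig:1}, the argument is pure bookkeeping with Lemma \ref{lem:alt1}, the only subtlety being to confirm that $u$ really sits above at least two leaves so that its binomial contribution is strictly positive.
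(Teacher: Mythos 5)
Your proposal is correct and follows essentially the same route as the paper: in Fig.~\ref{fig:1} the new node $x$ is the parent of $T_1,\ldots,T_{k-1}$ (your $I=\{1,\ldots,k-1\}$), and the paper's proof is exactly your cancellation argument via Lemma \ref{lem:alt1}, giving $\Phi(T_0')-\Phi(T_0)=\binom{\kappa_{T_0'}(x)}{2}>0$. Your version is marginally more general in allowing an arbitrary subset $I$ with $|I|\geq 2$ to hang below the new node, but the computation and the role of $k\geq 3$ are identical.
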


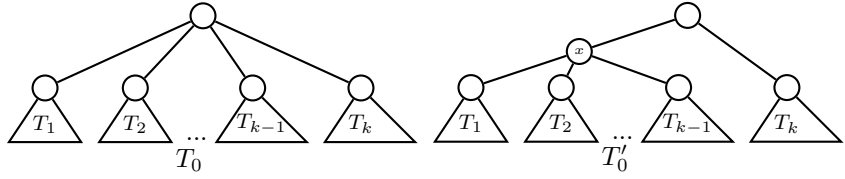
\begin{figure}[htb]
\begin{center}
\begin{tikzpicture}[thick,>=stealth,scale=0.24]
\draw(0,0) node[tre] (z1) {}; 
\draw (z1)--(-2,-3)--(2,-3)--(z1);
\draw(0,-2) node  {\footnotesize $T_1$};
\draw(5,0) node[tre] (z2) {}; 
\draw (z2)--(3,-3)--(7,-3)--(z2);
\draw(5,-2) node  {\footnotesize $T_2$};
\draw(8,-2.8) node {.}; 
\draw(8.4,-2.8) node {.}; 
\draw(8.8,-2.8) node {.}; 
\draw(11.5,0) node[tre] (z3) {}; 
\draw (z3)--(9.5,-3)--(14.5,-3)--(z3);
\draw(12,-2) node  {\footnotesize $T_{k-1}$};
\draw(17.5,0) node[tre] (z4) {}; 
\draw (z4)--(15.5,-3)--(20.5,-3)--(z4);
\draw(17.5,-2) node  {\footnotesize $T_{k}$};
\draw(8.75,4) node[tre] (r) {};  
\draw (r)--(z1);
\draw (r)--(z2);
\draw (r)--(z3);
\draw (r)--(z4);
\draw(8,-4) node  {$T_0$};
\end{tikzpicture}
\
\begin{tikzpicture}[thick,>=stealth,scale=0.24]
\draw(0,0) node[tre] (z1) {}; 
\draw (z1)--(-2,-3)--(2,-3)--(z1);
\draw(0,-2) node  {\footnotesize $T_1$};
\draw(5,0) node[tre] (z2) {}; 
\draw (z2)--(3,-3)--(7,-3)--(z2);
\draw(5,-2) node  {\footnotesize $T_2$};
\draw(8,-2.8) node {.}; 
\draw(8.4,-2.8) node {.}; 
\draw(8.8,-2.8) node {.}; 
\draw(11.5,0) node[tre] (z3) {}; 
\draw (z3)--(9.5,-3)--(14.5,-3)--(z3);
\draw(12,-2) node  {\footnotesize $T_{k-1}$};
\draw(17.5,0) node[tre] (z4) {}; 
\draw (z4)--(15.5,-3)--(20.5,-3)--(z4);
\draw(17.5,-2) node  {\footnotesize $T_{k}$};
\draw(6,2) node[tre] (x) {}; \etq x
\draw(12,4) node[tre] (r) {};  
\draw (x)--(z1);
\draw (x)--(z2);
\draw (x)--(z3);
\draw (r)--(x);
\draw (r)--(z4);
\draw(8,-3.9) node  {$T_0'$};
\end{tikzpicture}

\end{center}
\caption{\label{fig:1} 
The trees $T_0$ and $T_0'$ in the statement of Lemma \ref{lem:basic1}.}
\end{figure}

\begin{proof}
With the notations of Fig. \ref{fig:1}, notice that
$$
\begin{array}{rl}
\Phi(T_0')-\Phi(T_0) & \displaystyle = \sum_{v\in V_{int}(T'_0)-\{r\}} \binom{\kappa_{T_0'}(v)}{2}-
\sum_{v\in V_{int}(T_0)-\{r\}} \binom{\kappa_{T_0}(v)}{2}\\ & \displaystyle =\binom{\kappa_{T'_0}(x)}{2}> 0. 
\end{array}
$$
\end{proof}

\begin{corollary}\label{cor:basic1}
For every non-binary phylogenetic tree $T\in \TT_n$, there always exists a binary phylogenetic tree $T'$ such that $\Phi(T')>\Phi(T)$.
\end{corollary}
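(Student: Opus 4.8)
The plan is to turn a non-binary tree into a binary one by repeatedly splitting a single multifurcating node, certifying with Lemma \ref{lem:basic1} that every split strictly increases $\Phi$, and certifying with a counting monovariant that the procedure halts exactly at a binary tree.

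First I would use that $T\in\TT_n$ is non-binary to select an internal node $z$ with $k\geq 3$ children, with $T_1,\dots,T_k$ the subtrees rooted at those children; then the subtree of $T$ rooted at $z$ is precisely the tree $T_0$ of Fig. \ref{fig:1}. Grafting the tree $T_0'$ of that figure in place of $T_0$ produces a tree $\widehat T\in\TT_n$ differing from $T$ only on the subtree rooted at $z$, so the locality Lemma \ref{lem:basic} applies and, together with Lemma \ref{lem:basic1}, gives
$$
\Phi(\widehat T)-\Phi(T)=\Phi(T_0')-\Phi(T_0)>0 .
$$
Hence a single split strictly raises the total cophenetic index, while leaving the $n$ leaves unchanged and adding exactly one internal node.

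It remains to control the iteration. Because every internal node of a phylogenetic tree has at least two children, an edge count shows that $|V_{int}(T)|\leq n-1$ for all $T\in\TT_n$, with equality exactly at binary trees, so the defect $n-1-|V_{int}(T)|$ is a non-negative integer that decreases by one at each split. The cleanest write-up proves, by induction on this defect, the slightly stronger statement that every $T\in\TT_n$ admits a binary $T'\in\TB_n$ with $\Phi(T')\geq\Phi(T)$, the inequality being strict whenever $T$ is non-binary. The base case, defect zero, is the binary case and is trivial with $T'=T$; in the inductive step a single split turns $T$ into a tree $\widehat T$ of strictly smaller defect, and the inductive hypothesis supplies a binary $T'$ with $\Phi(T')\geq\Phi(\widehat T)>\Phi(T)$. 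I expect the argument to be essentially routine once Lemma \ref{lem:basic1} is in hand; the only point deserving care is the standing convention that internal nodes have out-degree at least two, which is exactly what bounds the monovariant and forces termination precisely at binary trees.
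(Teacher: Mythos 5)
Your proof is correct and follows essentially the same route as the paper: locate a multifurcating node $z$, replace the subtree $T_0$ of Fig.~\ref{fig:1} rooted there by $T_0'$, and combine Lemma \ref{lem:basic1} with the locality Lemma \ref{lem:basic} to obtain a strict increase of $\Phi$. If anything, your explicit induction on the defect $n-1-|V_{int}(T)|$ is more careful than the paper's own proof, which performs the replacement only once and tacitly leaves the iteration to the reader, even though a single split (e.g.\ at a node with $k\geq 4$ children, or when other multifurcations remain) produces a tree that need not yet be binary.
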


\begin{proof}
Let $T\in \TT_n$ be a non-binary phylogenetic tree. Then it contains an internal node $z$ whose rooted subtree looks like the tree $T_0$ in the previous lemma, for some $k\geq 3$. By Lemma \ref{lem:basic} and the last lemma,  if  $T'\in\TT_n$ is the tree obtained from $T$ by replacing $T_0$ by $T_0'$ as its subtree rooted at $z$, then $\Phi(T')-\Phi(T)> 0$.
\end{proof}

Therefore, the maximum total cophenetic index is reached at a binary tree.

\begin{lemma}\label{lem:basic2}
Let $m\geq 4$, let $2\leq k\leq m-2$, let $T_1$ be any binary  tree on $\{k+1,\ldots,m\}$, and let  $T_0$ and $T_0'$ be the phylogenetic trees in $\TB_m$ depicted  in Fig.~\ref{fig:2}. Then,
$\Phi(T'_0)-\Phi(T_0)>0$.
\end{lemma}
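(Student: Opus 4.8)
The plan is to argue entirely through the cluster formula of Lemma~\ref{lem:alt1}, which for a binary tree $T\in\TB_m$ reads $\Phi(T)=\sum_{v\in V_{int}(T)-\{r\}}\binom{\kappa_T(v)}{2}$. Since $T_0$ and $T_0'$ both lie in $\TB_m$ and are assembled from the \emph{same} pieces --- the subtree $T_1$ on $\{k+1,\dots,m\}$ and the single leaves $1,\dots,k$ --- the idea is to match up their non-root internal nodes so that the common contributions cancel and only the genuinely rearranged nodes survive in $\Phi(T_0')-\Phi(T_0)$.

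First I would check that $T_1$ contributes identically to both trees. Every internal node $w$ of $T_1$ (including its root $z$) has the same descendant-leaf set $C_{T_1}(w)\subseteq\{k+1,\dots,m\}$ in $T_0$ and in $T_0'$, because relocating $T_1$ as a whole does not alter the leaves lying below any of its own nodes; in particular $\kappa(z)=m-k$ in both trees. Hence all the terms $\binom{\kappa(w)}{2}$ with $w\in V_{int}(T_1)$ occur on both sides of the difference and cancel. This is precisely the \emph{locality} recorded in Lemma~\ref{lem:basic}, and it is what allows the argument to go through for an \emph{arbitrary} binary $T_1$, with no assumption on its shape.

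What remains are the $k-1$ internal nodes carrying the leaves $1,\dots,k$. Reading their clusters off Fig.~\ref{fig:2}: in the bushier tree $T_0$ these nodes sit above only the leaves $1,\dots,k$, so their cluster sizes are $2,3,\dots,k$, whereas in the more caterpillar-like tree $T_0'$, where $T_1$ has been pushed down to the bottom of a single long spine, the same nodes also sit above $T_1$ and their cluster sizes are $m-k+1,m-k+2,\dots,m-1$. The difference therefore collapses to
$$
\Phi(T_0')-\Phi(T_0)=\sum_{j=m-k+1}^{m-1}\binom{j}{2}-\sum_{j=2}^{k}\binom{j}{2}.
$$
Both sums have exactly $k-1$ terms, so I would compare them term by term in increasing order: the $i$-th terms are $\binom{m-k+i}{2}$ and $\binom{i+1}{2}$, and $m-k+i\geq i+1$ holds because $k\leq m-1$, while for the top term ($i=k-1$) one has $m-1>k$ since $k\leq m-2$. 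As $j\mapsto\binom{j}{2}$ is strictly increasing, every paired term of the first sum dominates the corresponding term of the second, with at least one strict inequality, giving $\Phi(T_0')-\Phi(T_0)>0$.

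The main obstacle is really the bookkeeping in the third step: one must read off from Fig.~\ref{fig:2} exactly which internal nodes change and what their new cluster sizes are, and confirm that nothing inside the $T_1$-part has been inadvertently disturbed (this is where the locality of the second step is essential). Once the difference is reduced to the two binomial sums, positivity is routine; the only point requiring care is that the index ranges $\{2,\dots,k\}$ and $\{m-k+1,\dots,m-1\}$ may overlap, so one should not argue by disjointness but by the sorted term-by-term domination above, which remains valid throughout the entire range $2\leq k\leq m-2$.
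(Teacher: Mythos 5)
Your proof is correct and follows essentially the same route as the paper's: both apply Lemma~\ref{lem:alt1}, cancel the common contributions of $T_1$ (including the $\binom{m-k}{2}$ term for its root, present in both trees), and reduce the difference to $\sum_{j=1}^{k-1}\bigl(\binom{m-k+j}{2}-\binom{j+1}{2}\bigr)>0$. The only cosmetic difference is that you establish positivity by sorted term-by-term domination with one strict top inequality, whereas the paper simply observes that $m-k\geq 2$ already makes every term strictly positive.
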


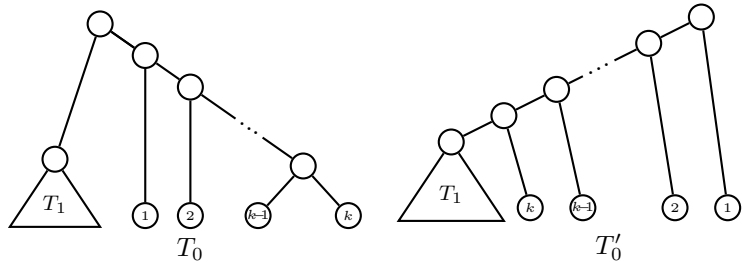
\begin{figure}[htb]
\begin{center}
\begin{tikzpicture}[thick,>=stealth,scale=0.3]
\draw(0,0) node[tre] (z) {}; 
\draw (z)--(-2,-3)--(2,-3)--(z);
\draw(0,-2) node  {\footnotesize $T_1$};
\draw (2,6) node[tre] (x) {}; 
\draw  (x)--(z);
\draw(4,4.6) node[tre] (a) {}; 
\draw (x)--(a);
\draw(6,3.2) node[tre] (b) {}; 
\draw (a)--(b);
\draw(4,-2.5) node [tre] (1) {}; \etq 1
\draw(6,-2.5) node [tre] (2) {}; \etq 2
\draw (b)--(8,1.8);
\draw (8.3,1.8-0.7*0.3) node {.};
\draw (8.6,1.8-0.7*0.6) node {.};
\draw (8.9,1.8-0.7*0.9) node {.};
\draw(11,-0.3) node[tre] (c) {}; 
\draw (9.2,1.8-0.7*1.2)--(c);
\draw(9,-2.5) node [tre] (k-1) {};
\draw (k-1) node {\tiny $k\!\!\!-\!\!\!1$};  
\draw(13,-2.5) node [tre] (k) {}; \etq{k}
\draw (x)--(a);
\draw (a)--(1);
\draw  (b)--(2);
\draw (c)--(k);
\draw (c)--(k-1);
\draw(6,-4) node  {$T_0$};
\end{tikzpicture}
\quad
\begin{tikzpicture}[thick,>=stealth,scale=0.35]
\draw(0,0) node[tre] (z) {}; 
\draw (z)--(-2,-3)--(2,-3)--(z);
\draw(0,-2) node  {\footnotesize $T_1$};
\draw(2,1) node[tre] (a) {}; 
\draw(3,-2.5) node [tre] (1) {}; 
\draw(1) node  {\tiny $k$};
\draw(4,2) node[tre] (b) {}; 
\draw(5,-2.5) node [tre] (2) {}; 
\draw(2) node  {\tiny $k\!\!\!-\!\!\!1$};
\draw (b)--(5,2.5);
\draw (5.3,2.65) node {.};
\draw (5.6,2.8) node {.};
\draw (5.9,2.95) node {.};
\draw(7.5,3.75) node[tre] (c) {}; 
\draw (6.2,3.1)--(c);
\draw(8.5,-2.5) node [tre] (k) {}; 
\draw(k) node  {\tiny $2$};
\draw(9.5,4.75) node[tre] (r) {}; 
\draw(10.5,-2.5) node [tre] (k0) {}; 
\draw(k0) node  {\tiny $1$};
\draw (r)--(k0);
\draw (r)--(c);
\draw  (a)--(z);
\draw (a)--(1);
\draw (b)--(a);
\draw  (b)--(2);
\draw (c)--(k);
\draw(6,-4) node  {$T_0'$};
\end{tikzpicture}

\end{center}
\caption{\label{fig:2} 
The trees $T_0$ and $T_0'$ in the statement of Lemma \ref{lem:basic2}.}
\end{figure}

\begin{proof}
By Lemma \ref{lem:alt1}, and recalling that $|L(T_1)|=m-k$, we have that 
$$
\begin{array}{l}
\displaystyle \Phi(T_0)=\phi(T_1)+\binom{m-k}{2}+\binom{k}{2}+\binom{k-1}{2}+\cdots+\binom{3}{2}+\binom{2}{2}\\
\displaystyle \Phi(T'_0)=\binom{m-1}{2}+\binom{m-2}{2}+\cdots+\binom{m-k+1}{2}+\binom{m-k}{2}+\phi(T_1)
\end{array}
$$
and hence, since $m-k\geq 2$,
$$
\Phi(T_0')-\Phi(T_0)=\sum_{j=1}^{k-1}\binom{m-k+j}{2}-\binom{j+1}{2}>0.
$$
\end{proof}

\begin{proposition}\label{ref:largest}
The trees in $\TT_n$ with maximum  total cophenetic index are exactly the rooted caterpillars $K_n$, and this maximum is
$\Phi(K_n)=\binom{n}{3}$.
\end{proposition}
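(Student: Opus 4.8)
The plan is to reduce to binary trees and then push any non-caterpillar strictly below a caterpillar using the local move of Lemma~\ref{lem:basic2}. First, Corollary~\ref{cor:basic1} already rules out every non-binary tree as a maximizer: for each such tree there is a binary tree with strictly larger index. Hence every maximizer of $\Phi$ on $\TT_n$ lies in $\TB_n$, and it suffices to identify the binary maximizers and to evaluate their common index.

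Next I would argue that a binary tree that is \emph{not} a rooted caterpillar cannot be optimal. The key is to exhibit inside it a rooted subtree having the shape of the tree $T_0$ of Lemma~\ref{lem:basic2}: a node one of whose child subtrees is an arbitrary binary tree on at least two leaves while the other child subtree is a caterpillar on at least two leaves. To find such a node, pick any cherry of $T$ and climb towards the root as long as the sibling subtree met at each step is a single leaf; while this holds, the subtree swept out is itself a caterpillar (on $i+2$ leaves after $i$ steps). If the climb reached the root without ever meeting a non-leaf sibling, then $T$ would be a caterpillar; since it is not, the climb stops at a node $z$ whose two child subtrees are a caterpillar on some $k\geq 2$ leaves and a binary tree $T_1$ on $|L(T_1)|\geq 2$ leaves. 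This is exactly the configuration of $T_0$ with $m=k+|L(T_1)|\geq 4$ and $2\leq k\leq m-2$, so Lemma~\ref{lem:basic2} applies; replacing this subtree by the corresponding caterpillar $T_0'$ gives, through the locality Lemma~\ref{lem:basic}, a tree $T'$ with $\Phi(T')>\Phi(T)$.

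The step I expect to demand the most care is this localization: I must verify that the climb always terminates at an \emph{internal} node of $T$ (equivalently, that failing to be a caterpillar forces a non-leaf sibling somewhere on the path from the chosen cherry to the root) and that both child subtrees of $z$ carry at least two leaves, so that all the numerical hypotheses $m\geq 4$, $2\leq k\leq m-2$ of Lemma~\ref{lem:basic2} are met. Granting this, no non-caterpillar in $\TB_n$ is optimal; and since all rooted caterpillars with $n$ leaves share a single topology and $\Phi$ ignores leaf labels, they all attain one common value. That value is therefore the maximum of $\Phi$ over $\TT_n$, and the caterpillars are exactly the maximizers.

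It remains to evaluate the maximum, which I would do directly from Lemma~\ref{lem:alt1}. In $K_n$ the non-root internal nodes have clusters of sizes $2,3,\ldots,n-1$, whence
$$
\Phi(K_n)=\sum_{s=2}^{n-1}\binom{s}{2}=\binom{n}{3}
$$
by the hockey-stick identity. (Alternatively, the whole comparison could be carried out by induction on $n$ via the recursion of Lemma~\ref{lem:rec}, using $\binom{a}{3}+\binom{a}{2}=\binom{a+1}{3}$ together with the convexity of $\binom{\cdot}{3}$ to show that splitting the leaves at the root into parts of sizes $1$ and $n-1$ is optimal; the local-move argument above seems more transparent given Lemma~\ref{lem:basic2}.)
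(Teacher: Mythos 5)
Your proposal is correct and follows essentially the same route as the paper: reduce to binary trees via Corollary~\ref{cor:basic1}, locate inside any non-caterpillar a rooted subtree of the form $T_0$ of Lemma~\ref{lem:basic2}, replace it by $T_0'$ through the locality Lemma~\ref{lem:basic} to strictly increase $\Phi$, and evaluate $\Phi(K_n)=\sum_{s=2}^{n-1}\binom{s}{2}=\binom{n}{3}$ via Lemma~\ref{lem:alt1}. The only (harmless) difference is the localization step: the paper takes a deepest internal node with no leaf child, so that \emph{both} child subtrees are caterpillars, whereas you climb from a cherry until the sibling subtree is not a single leaf, exploiting that Lemma~\ref{lem:basic2} allows $T_1$ to be an arbitrary binary tree with at least two leaves; your verification of the hypotheses $k\geq 2$ and $m-k\geq 2$ is sound.
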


\begin{proof}
By Corollary \ref{cor:basic1}, any tree in $\TT_n$ with maximum  total cophenetic index will be binary. Let now $T\in\TB_n$ and assume that it is not a caterpillar. Therefore, it has an internal node $z$ of largest depth without any leaf child; in particular, all internal descendant nodes of $z$ have some leaf child. Thus, and up to a relabeling of its leaves, the subtree of $T$ rooted at $z$ has the form of the tree $T_0$  in Fig. \ref{fig:larg1}, for some $k\geq 2$ and some $l\geq k+2$.  But then, by Lemma \ref{lem:basic2} (taking as $T_1$ the caterpillar subtree rooted at the parent of the leaf $k$),   the tree $T_0'$ also depicted in Fig. \ref{fig:larg1} has a strictly larger total cophenetic index. 
Then, by Lemma \ref{lem:basic}, if we replace in $T$ the subtree rooted at $z$ by this tree $T_0'$, we obtain a new tree $T'$ with $\Phi(T')> \Phi(T)$.
This implies that no tree other than a caterpillar can have the largest 
total cophenetic index.

\begin{figure}[htb]
\begin{center}
\begin{tikzpicture}[thick,>=stealth,scale=0.25]
\draw(0,0) node[tre] (1) {};   \etq 1
\draw(2,2) node[tre] (a) {}; 
\draw(3,0) node [tre] (2) {}; \etq 2
\draw(4,4) node[tre] (b) {}; 
\draw(5,2) node [tre] (3) {}; \etq 3
\draw (b)--(5,5);
\draw (5.3,5.3) node {.};
\draw (5.6,5.6) node {.};
\draw (5.9,5.9) node {.};
\draw(7.5,7.5) node[tre] (c) {}; 
\draw (6.2,6.2)--(c);
\draw(8.5,4.5) node [tre] (k) {}; \etq k
\draw(9.5,9.5) node[tre] (d) {}; 
\draw (d) node {\tiny $z$};  
\draw(19,0) node[tre] (l) {};   \etq l
\draw(17,2) node[tre] (a1) {}; 
\draw(16,0) node [tre] (l-1) {}; 
\draw (l-1) node {\tiny $l\!\!\!-\!\!\!1$};  
\draw(15,4) node[tre] (b1) {}; 
\draw(14,2) node [tre] (l-2) {}; 
\draw (l-2) node {\tiny $l\!\!\!-\!\!\!2$};  
\draw (b1)--(14,5);
\draw (13.7,5.3) node {.};
\draw (13.4,5.6) node {.};
\draw (13.1,5.9) node {.};
\draw(11.5,7.5) node[tre] (c1) {}; 
\draw (12.8,6.2)--(c1);
\draw(10.5,4.5) node [tre] (k+1) {}; 
\draw (k+1) node {\tiny $k\!\!\!+\!\!\!1$};  
\draw  (a)--(1);
\draw (a)--(2);
\draw (b)--(a);
\draw  (b)--(3);
\draw (c)--(k);
\draw (d)--(c);
\draw  (a1)--(l);
\draw (a1)--(l-1);
\draw (b1)--(a1);
\draw  (b1)--(l-2);
\draw (c1)--(k+1);
\draw (d)--(c1);
\draw(9.5,-2) node  {$T_0$};
\end{tikzpicture}
\quad
\begin{tikzpicture}[thick,>=stealth,scale=0.3]
\draw(0,0) node[tre] (1) {};   \etq 1
\draw(2,2) node[tre] (a) {}; 
\draw(4,0) node[tre] (2) {}; \etq 2
\draw(4,3) node[tre] (b) {}; 
\draw(6,0) node[tre] (3) {}; \etq 3
\draw (b)--(5,3.5);
\draw (5.3,3.65) node {.};
\draw (5.6,3.8) node {.};
\draw (5.9,3.95) node {.};
\draw(7.5,4.75) node[tre] (c) {}; 
\draw (6.2,4.1)--(c);
\draw(10,0) node[tre] (k) {}; \etq k
\draw(9.5,5.75) node[tre] (d) {}; 
\draw(12,0) node[tre] (k+1) {}; 
\draw (k+1) node {\tiny $l$};  
\draw(d)--(10,6); 
\draw (10.3,6.15) node {.};
\draw (10.6,6.3) node {.};
\draw (10.9,6.45) node {.};
\draw(12.5,7.25) node[tre] (e) {}; 
\draw (11.2,6.6)--(e);
\draw (e) node {\tiny $z$};  
\draw(16,0) node[tre] (l) {};   
\draw (l) node {\tiny $k\!\!\!+\!\!\!1$};  
\draw  (a)--(1);
\draw (a)--(2);
\draw (b)--(a);
\draw  (b)--(3);
\draw (c)--(k);
\draw (d)--(c);
\draw (d)--(k+1);
\draw (e)--(l);
\draw(9.5,-2) node  {$T_0'$};
\end{tikzpicture}
\end{center}
\caption{\label{fig:larg1} 
The trees $T_0$ and $T_0'$ in the proof of Proposition \ref{ref:largest}.}
\end{figure}
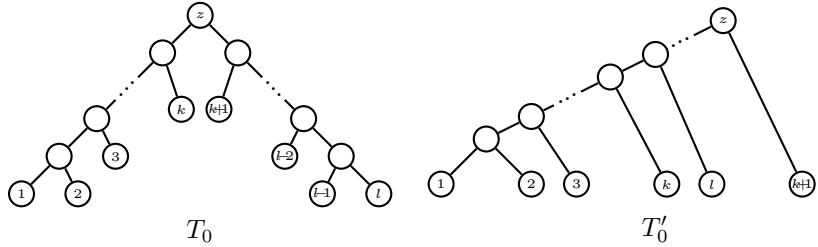

As far as the total cophenetic index of the rooted caterpillar $K_n$ with $n$ leaves depicted in Fig. \ref{fig:exs}.(a) goes, since the parent of the leaf labelled $j$, for $j=2,\ldots,n$, has $j$ descendant leaves, by Lemma \ref{lem:alt1} we have that
$\Phi(K_n)=\sum\limits_{j=2}^{n-1}\binom{j}{2}=\binom{n}{3}$. 
\end{proof} 

It is obvious that minimum total cophenetic index is 0, and it is attained only at the rooted star trees, depicted in Fig. \ref{fig:exs}.(b).  Therefore, the range of $\Phi$ on $\TT_n$ goes from 0 to $\binom{n}{3}$. This is one order of magnitude larger than the range of Sackin's and Colless' indices, whose maximum value, reached  also at the rooted caterpillars, has order $O(n^2)$  \cite{Heard92,Rogers:96,Shao:90}.

Let us characterize now those \emph{binary} phylogenetic trees with smallest total cophenetic index. 

\begin{lemma}\label{lem:min}
Let $T_1,T_2,T_3,T_4$ be an ordered binary forest on $\{1,\ldots,m\}$,  let
$x_i=|L(T_i)|$, for $i=1,2,3,4$, and assume that $x_1\geq x_2$, $x_3\geq x_4$ and $x_1> x_3$. Let $T_0$ the phylogenetic tree depicted in Fig \ref{fig:min1}.(a), and let $T\in \TB_n$ ($n\geq m$) be a binary phylogenetic tree having $T_0$ as a subtree rooted at some node. If $\Phi(T)$ is minimum in $\TB_n$, then $x_4\geq x_2$.
\end{lemma}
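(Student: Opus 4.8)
The plan is to argue by contradiction. I assume that $\Phi(T)$ is minimum on $\TB_n$ but that, against the claim, $x_4 < x_2$, and I exhibit a strictly better tree. Reading off the shape of $T_0$ from Fig.~\ref{fig:min1}.(a) — two internal nodes hanging from a common parent, the first carrying $T_1,T_2$ as its subtree-children and the second carrying $T_3,T_4$ — the natural modification is to regroup the four subtrees so as to pair them \emph{oppositely}: keeping the same skeleton, I form $T_0'$ by placing $T_1$ and $T_4$ under the first node and $T_3$ and $T_2$ under the second (that is, I swap the positions of $T_2$ and $T_4$). Since $T_0'$ is again a binary tree on the same leaf set $\{1,\ldots,m\}$, the tree $T'$ obtained from $T$ by substituting $T_0'$ for $T_0$ at the relevant node lies in $\TB_n$.

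First I would invoke Lemma~\ref{lem:basic} to reduce the global comparison to the local one, $\Phi(T)-\Phi(T')=\Phi(T_0)-\Phi(T_0')$, so that it suffices to prove $\Phi(T_0')<\Phi(T_0)$. Next, applying the recursion of Lemma~\ref{lem:rec} twice to each of $T_0$ and $T_0'$ (splitting first at the root and then at its two children), every term $\Phi(T_i)$ and every term $\binom{x_i}{2}$ occurs in both expansions and cancels, leaving only
$$
\Phi(T_0')-\Phi(T_0)=\binom{x_1+x_4}{2}+\binom{x_2+x_3}{2}-\binom{x_1+x_2}{2}-\binom{x_3+x_4}{2}.
$$
Expanding each binomial via $\binom{a}{2}=(a^2-a)/2$, the linear parts cancel and the quadratic parts collapse to the factorization
$$
\Phi(T_0')-\Phi(T_0)=(x_1-x_3)(x_4-x_2).
$$
Since $x_1>x_3$ by hypothesis, the first factor is positive; and the assumption $x_4<x_2$ makes the second factor negative, so $\Phi(T_0')<\Phi(T_0)$, contradicting the minimality of $\Phi(T)$. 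Hence $x_4\geq x_2$.

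The computation is entirely routine once the rearrangement is chosen, so I expect no genuine obstacle. The one point worth flagging is that the modification is an interchange of cousins across the two sides — moving the larger of $T_2,T_4$ next to the smaller of $T_1,T_3$ — which is exactly the operation that balances the two sibling clusters; it is the clean factorization $(x_1-x_3)(x_4-x_2)$ that makes the sign transparent and thereby turns the convexity intuition into the stated inequality. The only care needed is bookkeeping: tracking which subtree sizes are paired in $T_0$ versus $T_0'$ so that the cancellation of the $\Phi(T_i)$ and $\binom{x_i}{2}$ terms is correctly accounted for.
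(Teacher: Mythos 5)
Your proof is correct and follows essentially the same route as the paper's: the identical interchange of $T_2$ and $T_4$, reduction to the local difference via Lemma~\ref{lem:basic}, and the same factorization $\Phi(T_0')-\Phi(T_0)=(x_1-x_3)(x_4-x_2)<0$. The only cosmetic difference is that you derive the binomial difference by applying Lemma~\ref{lem:rec} twice, whereas the paper reads it off directly from Lemma~\ref{lem:alt1} (only the clusters of the nodes $a$ and $b$ change); the computations coincide.
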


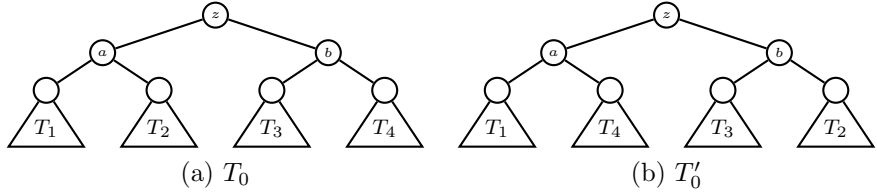
\begin{figure}[htb]
\begin{center}
\begin{tikzpicture}[thick,>=stealth,scale=0.25]
\draw(0,2) node[tre] (v1) {};   
\draw(6,2) node[tre] (v2) {}; 
\draw(12,2) node[tre] (v3) {}; 
\draw(18,2) node[tre] (v4) {}; 
\draw(3,4) node[tre] (a) {}; \etq a
\draw(15,4) node[tre] (b) {}; \etq b
\draw(9,6) node[tre] (z) {}; \etq z
\draw (z)--(a);
\draw (z)--(b);
\draw (a)--(v1);
\draw (a)--(v2);
\draw (b)--(v3);
\draw (b)--(v4);
\draw (v1)--(-2,-1)--(2,-1)--(v1);
\draw(0,0) node  {\footnotesize $T_1$};
\draw (v2)--(4,-1)--(8,-1)--(v2);
\draw(6,0) node  {\footnotesize $T_2$};
\draw (v3)--(10,-1)--(14,-1)--(v3);
\draw(12,0) node  {\footnotesize $T_3$};
\draw (v4)--(16,-1)--(20,-1)--(v4);
\draw(18,0) node  {\footnotesize $T_4$};
\draw(9,-2.5) node {(a) $T_0$};
\end{tikzpicture}
\quad
\begin{tikzpicture}[thick,>=stealth,scale=0.25]
\draw(0,2) node[tre] (v1) {};   
\draw(6,2) node[tre] (v2) {}; 
\draw(12,2) node[tre] (v3) {}; 
\draw(18,2) node[tre] (v4) {}; 
\draw(3,4) node[tre] (a) {}; \etq a
\draw(15,4) node[tre] (b) {}; \etq b
\draw(9,6) node[tre] (z) {}; \etq z
\draw (z)--(a);
\draw (z)--(b);
\draw (a)--(v1);
\draw (a)--(v2);
\draw (b)--(v3);
\draw (b)--(v4);
\draw (v1)--(-2,-1)--(2,-1)--(v1);
\draw(0,0) node  {\footnotesize $T_1$};
\draw (v2)--(4,-1)--(8,-1)--(v2);
\draw(6,0) node  {\footnotesize $T_4$};
\draw (v3)--(10,-1)--(14,-1)--(v3);
\draw(12,0) node  {\footnotesize $T_3$};
\draw (v4)--(16,-1)--(20,-1)--(v4);
\draw(18,0) node  {\footnotesize $T_2$};
\draw(9,-2.5) node {(b) $T'_0$};
\end{tikzpicture}
\end{center}
\caption{\label{fig:min1} 
(a) The tree $T_0$ in the statement of Lemma \ref{lem:min}.
(b) The tree $T_0'$ in the proof of Lemma \ref{lem:min}.}
\end{figure}

\begin{proof}
Assume that $x_2>x_4$. We shall show that, in this case, a suitable interchange of cousins in $T_0$ produces a tree with smaller total cophenetic index, which in particular will imply that $\Phi(T)$ cannot be the minimum in $\TB_n$.

Assume that the tree $T$ in the statement has the subtree $T_0$ rooted at a node $z$. Consider the tree $T_0'$ obtained by interchanging in $T_0$ the subtrees $T_2$ and $T_4$ (see Fig. \ref{fig:min1}.(b)) and let $T'$ be the tree obtained from $T$ by replacing $T_0$ by $T_0'$ as its subtree rooted at $z$. Then, by Lemma \ref{lem:alt1},
$$
\begin{array}{rl}
\Phi(T')-\Phi(T) & =\Phi(T_0')-\Phi(T_0)\\ & \displaystyle =\binom{\kappa_{T_0'}(a)}{2}+\binom{\kappa_{T_0'}(b)}{2}-\binom{\kappa_{T_0}(a)}{2}-\binom{\kappa_{T_0}(b)}{2}\\
& \displaystyle =\binom{x_1+x_4}{2}+\binom{x_2+x_3}{2}-\binom{x_1+x_2}{2}-\binom{x_3+x_4}{2}\\ \displaystyle 
& =x_1x_4+x_2x_3-x_1x_2-x_3x_4=(x_1-x_3)(x_4-x_2)<0
\end{array}
$$
which shows that $\Phi(T')<\Phi(T)$.
 \end{proof}
 
From the proof of the last lemma we deduce that if, in the tree  $T_0$ in Fig.~\ref{fig:min1}.(a), $|L(T_1)|\neq |L(T_3)|$ and $|L(T_2)|\neq |L(T_4)|$, and if we interchange $T_2$ and $T_4$, then the resulting tree has always a different total cophenetic index.

\begin{lemma}\label{lem:min2}
Let $T_1,T_2,$ be an ordered binary forest on $\{1,\ldots,m-1\}$,  let
$x_i=|L(T_i)|$, for $i=1,2$, and assume that $x_1\geq x_2$. Let $T_0$ the phylogenetic tree depicted in Fig \ref{fig:min3}.(a), and let $T\in \TB_n$ be a binary phylogenetic tree having $T_0$ as a subtree rooted at some node. If $\Phi(T)$ is minimum in $\TB_n$, then $x_1=x_2=1$.
\end{lemma}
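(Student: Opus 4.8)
The plan is to mimic the local-exchange contradiction argument used for Lemma~\ref{lem:min}. Write $z$ for the root of the subtree $T_0$ of Fig.~\ref{fig:min3}.(a); its two children are the pendant leaf and the internal node $a$, whose children are the roots of $T_1$ and $T_2$, so that $\kappa_{T_0}(a)=x_1+x_2=m-1$. Applying Lemma~\ref{lem:alt1} (equivalently, Lemma~\ref{lem:rec}) node by node, I would first record the identity $\Phi(T_0)=\Phi(T_1)+\Phi(T_2)+\binom{x_1}{2}+\binom{x_2}{2}+\binom{x_1+x_2}{2}$, where the final term is the contribution of $a$.

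Because $x_1\ge x_2\ge 1$, the desired conclusion $x_1=x_2=1$ is equivalent to $x_1=1$, so I would argue by contradiction, assuming $x_1\ge 2$ and $\Phi(T)$ minimal. I then form the tree $T_0'$ of Fig.~\ref{fig:min3}.(b) by rotating $T_0$ so that the \emph{larger} subtree $T_1$ becomes a child of $z$ while the pendant leaf is pushed down: $z$ now has as children the root of $T_1$ and a new internal node $a'$, and $a'$ has as children the root of $T_2$ and the pendant leaf. This preserves both binarity and the leaf set $\{1,\dots,m\}$, so Lemma~\ref{lem:basic} applies. The analogous bookkeeping gives $\Phi(T_0')=\Phi(T_1)+\Phi(T_2)+\binom{x_1}{2}+\binom{x_2}{2}+\binom{x_2+1}{2}$, and therefore $\Phi(T_0')-\Phi(T_0)=\binom{x_2+1}{2}-\binom{x_1+x_2}{2}$.

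To close the argument, I would observe that $x_1\ge 2$ gives $x_1+x_2\ge x_2+2>x_2+1$, so the strict monotonicity of $t\mapsto\binom{t}{2}$ on $t\ge 1$ makes this difference negative. By Lemma~\ref{lem:basic}, replacing $T_0$ by $T_0'$ inside $T$ yields a tree $T'\in\TB_n$ with $\Phi(T')<\Phi(T)$, contradicting the minimality of $\Phi(T)$; hence $x_1=1$ and thus $x_1=x_2=1$.

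The one delicate point is the choice of rotation: it is essential to lift the larger subtree $T_1$ and demote $T_2$. Lifting $T_2$ instead would produce the difference $\binom{x_1+1}{2}-\binom{x_1+x_2}{2}$, which is only $\le 0$ and vanishes precisely when $x_2=1$, exactly one of the configurations (namely $x_1\ge 2$, $x_2=1$) that we must rule out. Everything else is a routine binomial computation combined with the locality provided by Lemma~\ref{lem:basic}.
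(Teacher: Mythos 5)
Your proof is correct and is essentially identical to the paper's: the rotated tree you build is exactly the $T_0'$ of Fig.~\ref{fig:min3}.(b), and the paper likewise concludes from $\Phi(T_0')-\Phi(T_0)=\binom{x_2+1}{2}-\binom{x_1+x_2}{2}<0$ (valid since $x_1\geq 2$) together with the locality Lemma~\ref{lem:basic}. Your closing observation that lifting $T_2$ instead would only give $\binom{x_1+1}{2}-\binom{x_1+x_2}{2}\leq 0$, which vanishes when $x_2=1$, correctly explains why the larger subtree must be the one promoted — a point the paper leaves implicit.
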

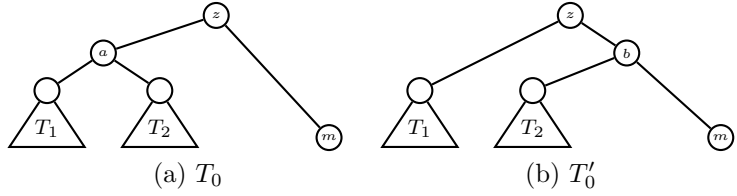
\begin{figure}[htb]
\begin{center}
\begin{tikzpicture}[thick,>=stealth,scale=0.25]
\draw(0,2) node[tre] (v1) {};   
\draw(6,2) node[tre] (v2) {}; 
\draw(3,4) node[tre] (a) {}; \etq a
\draw(15,-0.5) node[tre] (m) {}; \etq m
\draw(9,6) node[tre] (z) {}; \etq z
\draw (z)--(a);
\draw (z)--(m);
\draw (a)--(v1);
\draw (a)--(v2);
\draw (v1)--(-2,-1)--(2,-1)--(v1);
\draw(0,0) node  {\footnotesize $T_1$};
\draw (v2)--(4,-1)--(8,-1)--(v2);
\draw(6,0) node  {\footnotesize $T_2$};
\draw(7.5,-2.5) node {(a) $T_0$};
\end{tikzpicture}
\quad
\begin{tikzpicture}[thick,>=stealth,scale=0.25]
\draw(0,2) node[tre] (v1) {};   
\draw (v1)--(-2,-1)--(2,-1)--(v1);
\draw(0,0) node  {\footnotesize $T_1$};
\draw(8,6) node[tre] (z) {}; \etq z
\draw (z)--(v1);
\draw(6,2) node[tre] (v2) {}; 
\draw (v2)--(4,-1)--(8,-1)--(v2);
\draw(6,0) node  {\footnotesize $T_2$};
\draw(11,4) node[tre] (b) {}; \etq b
\draw(16,-0.5) node[tre] (m) {}; \etq m
\draw (z)--(b);
\draw (b)--(m);
\draw (b)--(v2);
\draw(7.5,-2.5) node {(b) $T'_0$};
\end{tikzpicture}
\end{center}
\caption{\label{fig:min3} 
(a) The tree $T_0$ in the statement of Lemma \ref{lem:min2}. (b) The tree $T_0'$ in the proof of Lemma \ref{lem:min2}.}
\end{figure}

\begin{proof}
Assume that $x_1>1$. We shall show that, again in this case, a suitable interchange of cousins in $T_0$ produces a tree with smaller total cophenetic index.

Assume that the tree $T$ in the statement has the subtree $T_0$ rooted at a node $z$.  Let $T'$ by the tree obtained from $T$ by replacing $T_0$ by the subtree $T_0'$ described in Fig. \ref{fig:min3}.(b). Then:
$$
\begin{array}{rl}
\Phi(T')-\Phi(T) & =\Phi(T_0')-\Phi(T_0)\\ & \displaystyle =\binom{\kappa_{T_0'}(b)}{2}-\binom{\kappa_{T_0}(a)}{2}=
\binom{x_2+1}{2}-\binom{x_1+x_2}{2}<0
\end{array}
$$
which shows that $\Phi(T')<\Phi(T)$. 
\end{proof}

The last two lemmas show that, unlike what happens with Sackin's and Colless' indices, any interchange of cousins that changes the balance of their grandparent always changes the total cophenetic index of a tree.

\begin{theorem}\label{th:min}
For every $T\in \TB_n$, $\Phi(T)$ is minimum on $\TB_n$ if, and only if, $T$ is maximally balanced.
\end{theorem}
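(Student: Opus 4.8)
The plan is to prove Theorem~\ref{th:min} by showing that the property of being maximally balanced is exactly what remains once we exhaust the imbalance-reducing moves supplied by Lemmas~\ref{lem:min} and~\ref{lem:min2}. The key observation is that a binary tree is maximally balanced precisely when, at every internal node $v$ with children $v_1,v_2$, we have $bal_T(v)=|\kappa_T(v_1)-\kappa_T(v_2)|\leq 1$. So the natural strategy is contrapositive for one direction: assume $T\in\TB_n$ is \emph{not} maximally balanced, locate an unbalanced internal node, and exhibit a local modification that strictly decreases $\Phi$, contradicting minimality.

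For the nontrivial direction (minimum $\Rightarrow$ maximally balanced), I would first suppose $\Phi(T)$ is minimum on $\TB_n$ but that $T$ has some internal node $z$ with $bal_T(z)\geq 2$; choose $z$ of \emph{maximal depth} among such unbalanced nodes, so that both subtrees hanging below $z$ are themselves maximally balanced. Write the two children of $z$ as $a$ and $b$ (where I split into cases according to whether $a,b$ are internal or leaves). If both children of $z$ are internal, the subtree rooted at $z$ has exactly the shape of $T_0$ in Fig.~\ref{fig:min1}.(a), with four subtrees $T_1,\dots,T_4$; after relabeling so that $x_1\geq x_2$, $x_3\geq x_4$ and $x_1>x_3$ (the strict inequality can be arranged because imbalance at $z$ forces the two sides to have different sizes), Lemma~\ref{lem:min} forces $x_4\geq x_2$, while the grandparent balance being violated forces, together with the remark following that lemma, a contradiction unless the sizes already satisfy the balanced relation $\{\kappa(a),\kappa(b)\}=\{\lfloor\kappa(z)/2\rfloor,\lceil\kappa(z)/2\rceil\}$. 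If exactly one child of $z$, say $b$, is a leaf while $a$ is internal, then the subtree at $z$ has the shape of $T_0$ in Fig.~\ref{fig:min3}.(a) with $b=m$, and Lemma~\ref{lem:min2} forces $x_1=x_2=1$, i.e. $\kappa(z)=3$ with a balanced split, again contradicting $bal_T(z)\geq 2$. Combining the two cases shows every internal node of a $\Phi$-minimal tree is balanced, so $T$ is maximally balanced.

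For the converse (maximally balanced $\Rightarrow$ minimum), the cleanest route is induction on $n$ using the recursion of Lemma~\ref{lem:rec}: $\Phi(T)=\Phi(T_L)+\Phi(T_R)+\binom{|L(T_L)|}{2}+\binom{|L(T_R)|}{2}$, where $T_L,T_R$ are the subtrees at the root's children. Since the topology of the maximally balanced tree on $n$ leaves is unique up to relabeling (as noted in \S\ref{sec:prel}), and since $\Phi$ is invariant under relabeling, there is a \emph{unique} value $\Phi_{\min}(n)$ attained by maximally balanced trees; the forward direction already shows this value is minimal, so the two statements are equivalent. Alternatively one argues directly that the convexity of $m\mapsto\binom{m}{2}$ makes the root-term $\binom{|L(T_L)|}{2}+\binom{|L(T_R)|}{2}$ minimal exactly at the balanced split $\{\lfloor n/2\rfloor,\lceil n/2\rceil\}$, and then invokes the induction hypothesis on each side.

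The main obstacle is the bookkeeping in the first direction: I must verify that the two local moves of Lemmas~\ref{lem:min} and~\ref{lem:min2} between them \emph{cover every way} an internal node can fail to be balanced, and that choosing $z$ of maximal depth legitimately places its subtree into the exact configurations $T_0$ required by those lemmas (in particular that the maximal-depth choice makes the relevant sub-subtrees maximally balanced, so the hypotheses $x_1\geq x_2$, $x_3\geq x_4$ hold after relabeling). The subtle point is extracting the strict inequality needed to apply Lemma~\ref{lem:min} and reconciling it with the conclusion $x_4\geq x_2$ to derive the contradiction; I expect this case analysis, rather than any computation, to be where the real work lies.
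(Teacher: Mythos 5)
Your proposal follows essentially the same route as the paper's proof: choose a deepest unbalanced node $z$ (so the subtrees below it are maximally balanced), rule out a leaf child via Lemma~\ref{lem:min2}, and in the internal-children case combine the balancedness of $a$ and $b$ with Lemma~\ref{lem:min} to get the chain $x_1>x_3\geq x_4\geq x_2$ and the resulting numeric contradiction, with the converse direction following, exactly as you say, from the uniqueness of the maximally balanced topology and label-invariance of $\Phi$. Two small caveats: the strict inequality $x_1>x_3$ comes from the children's balancedness ($x_2\geq x_1-1$, $x_4\geq x_3-1$) combined with $\kappa_T(a)\geq\kappa_T(b)+2$, not merely from ``the two sides have different sizes''; and your alternative convexity argument for the converse is incomplete as stated, since minimizing the root term $\binom{|L(T_L)|}{2}+\binom{|L(T_R)|}{2}$ alone does not rule out an unbalanced split being compensated by smaller subtree contributions --- but your primary route does not rely on it.
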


\begin{proof}
Assume that $T\in \TB_n$ is not maximally balanced, and let $z$ be a non-balanced internal node in $T$  with largest depth. Assume that $a$ and $b$ are its children, with $\kappa_T(a)\geq \kappa_T(b)+2$.

If $b$ is a leaf, then, by Lemma \ref{lem:min2}, $\kappa_T(a)=2$ and therefore $\kappa_T(a)\not\geq 3=\kappa_T(b)+2$. Therefore, $a$ and $b$ are internal, and hence balanced. Let $T_0$ be the subtree of $T$ rooted at $z$, represented in Fig. \ref{fig:min1}.(a), and let $x_i=|L(T_i)|$, for $i=1,2,3,4$; without any loss of generality, we shall assume that $x_1\geq x_2$ and $x_3\geq x_4$ and thus, since $a$ and $b$ are balanced, $x_2=x_1$ or $x_1-1$ and $x_4=x_3$ or $x_3-1$. Then, $x_1+x_2=\kappa_T(a)\geq \kappa_T(b)+2=x_3+x_4+2$ implies that
$2x_1\geq 2x_3+1$, and hence that $x_1>x_3$.

%

Therefore, by Lemma \ref{lem:min}, if $\Phi(T)$ is minimum in $\TB_n$, it must happen that $x_1>x_3\geq x_4\geq x_2$. Since it forbids the equality $x_1=x_2$, it implies that $x_1=x_2+1$ and therefore $x_2=x_3=x_4$. But then $x_1+x_2=2x_2+1\not\geq x_3+x_4+2=2x_2+2$, against the assumption that $z$ is not balanced. 
\end{proof}

So, the only binary trees with minimum $\Phi$ are the maximally balanced.
Let us compute now this minimum value of $\Phi$ on $\BT_n$.  

\begin{lemma}\label{lem:recmin}
For every $n$, let $f(n)$ be the minimum of $\Phi$ on $\TB_n$. Then, $f(1)=f(2)=0$ and
$$
f(n)=f(\lceil n/2\rceil)+f(\lfloor n/2\rfloor)+\binom{\lceil n/2\rceil}{2}+\binom{\lfloor n/2\rfloor}{2},\quad \mbox{ for }n\geq 3.
$$
\end{lemma}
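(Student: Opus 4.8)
The plan is to prove the recurrence by combining the structural characterization of minimum-$\Phi$ binary trees (Theorem~\ref{th:min}) with the recursive decomposition of $\Phi$ (Lemma~\ref{lem:rec}). First I would dispose of the base cases: a binary tree with $1$ leaf is a single node and with $2$ leaves is a cherry, and in both cases there are no internal nodes other than the root, so by Lemma~\ref{lem:alt1} we get $\Phi=0$, giving $f(1)=f(2)=0$.

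For the recursive case $n\geq 3$, the key idea is that by Theorem~\ref{th:min} a tree $T\in\TB_n$ attains the minimum $f(n)$ if and only if it is maximally balanced. By the recursive definition of maximally balanced trees given in Section~\ref{sec:prel}, such a $T$ has a balanced root whose two subtrees are themselves maximally balanced; since the root is balanced, these subtrees have $\lceil n/2\rceil$ and $\lfloor n/2\rfloor$ leaves respectively, and each must in turn attain the minimum of $\Phi$ on its respective $\TB_{\lceil n/2\rceil}$ and $\TB_{\lfloor n/2\rfloor}$. Applying Lemma~\ref{lem:rec} to $T$ with its two root-subtrees $T_1,T_2$ (so $k=2$, $|L(T_1)|=\lceil n/2\rceil$, $|L(T_2)|=\lfloor n/2\rfloor$) gives
$$
\Phi(T)=\Phi(T_1)+\Phi(T_2)+\binom{\lceil n/2\rceil}{2}+\binom{\lfloor n/2\rfloor}{2}.
$$
Substituting $\Phi(T_1)=f(\lceil n/2\rceil)$ and $\Phi(T_2)=f(\lfloor n/2\rfloor)$ yields the claimed recurrence for $f(n)$.

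The one point requiring a little care—and what I expect to be the main (though minor) obstacle—is justifying that the minimum over $\TB_n$ really factors as the sum of the minima over the two halves, rather than merely being bounded above by it. The clean way to see this is to invoke Theorem~\ref{th:min} in both directions: the minimizers over $\TB_n$ are \emph{exactly} the maximally balanced trees, and a maximally balanced tree on $n$ leaves is characterized recursively as one whose root is balanced (forcing the split $\lceil n/2\rceil,\lfloor n/2\rfloor$) and whose two subtrees are maximally balanced on their leaf sets, hence (again by Theorem~\ref{th:min}) minimizers of $\Phi$ on $\TB_{\lceil n/2\rceil}$ and $\TB_{\lfloor n/2\rfloor}$. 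This establishes that any minimizer has the stated decomposition with $\Phi(T_i)$ equal to the respective minima, so the value $f(n)$ is forced to equal the right-hand side, completing the proof.
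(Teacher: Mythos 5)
Your proof is correct and takes essentially the same route as the paper, which also obtains the recurrence as a direct consequence of Lemma~\ref{lem:rec} combined with the fact (via Theorem~\ref{th:min}) that the minimizers are exactly the maximally balanced trees, whose balanced root forces the $\lceil n/2\rceil,\lfloor n/2\rfloor$ split and whose two root-subtrees are again maximally balanced. The only difference is that you spell out explicitly the ``both directions'' use of Theorem~\ref{th:min} that the paper's one-line proof leaves implicit.
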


\begin{proof}
This recurrence for $f(n)$ is a direct consequence of Lemma \ref{lem:rec} and the fact that the root of a maximally balanced tree in $\TB_n$ is balanced and the subtrees rooted at their children are maximally balanced.
\end{proof}

\begin{proposition}
For every $n\geq 0$, let $a(n)$ is the highest power of 2 that divides $n!$. Then, for every $n\geq 1$,
$$
f(n)= \sum_{k=0}^{n-1} a(n).
$$
\end{proposition}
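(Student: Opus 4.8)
The plan is to prove the (evidently misstated) identity $f(n)=\sum_{k=0}^{n-1}a(k)$, where I read $a(n)$ as the exponent of the highest power of $2$ dividing $n!$, i.e. the $2$-adic valuation $\nu_2(n!)$ (the literal sum $\sum_{k=0}^{n-1}a(n)$ would be the constant summand repeated $n$ times, giving $n\cdot a(n)$, which already fails at $n=3$). Write $g(n)=\sum_{k=0}^{n-1}a(k)$. Since Lemma~\ref{lem:recmin} determines $f$ uniquely from $f(1)=f(2)=0$ together with the recurrence for $n\geq 3$, it suffices to show that $g$ obeys the same base cases and the same recurrence; then $f=g$ by strong induction on $n$. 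The base cases are immediate: $g(1)=a(0)=0$ and $g(2)=a(0)+a(1)=0$.

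The computational heart is to re-express $g$ in a form adapted to halving. From $a(k)=a(k-1)+\nu_2(k)$ one gets $a(k)=\sum_{j=1}^{k}\nu_2(j)$, and exchanging the order of summation yields
$$
g(n)=\sum_{k=0}^{n-1}\sum_{j=1}^{k}\nu_2(j)=\sum_{j=1}^{n-1}(n-j)\,\nu_2(j).
$$
I then split this sum by the parity of $j$. Odd $j$ contribute nothing since $\nu_2(j)=0$, while each even $j=2\ell$ satisfies $\nu_2(2\ell)=1+\nu_2(\ell)$. For $n=2m$ the even indices are $j=2\ell$ with $1\leq \ell\leq m-1$, so
$$
g(2m)=\sum_{\ell=1}^{m-1}2(m-\ell)\bigl(1+\nu_2(\ell)\bigr)=2\binom{m}{2}+2\sum_{\ell=1}^{m-1}(m-\ell)\nu_2(\ell)=m(m-1)+2g(m).
$$
The analogous computation for $n=2m+1$, where now $\ell$ runs up to $m$ and $2m+1-2\ell=2(m-\ell)+1$, isolates an extra arithmetic sum $\sum_{\ell=1}^{m}\nu_2(\ell)=a(m)$ and gives $g(2m+1)=m^2+2g(m)+a(m)$.

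It remains to match these two identities with the recurrence of Lemma~\ref{lem:recmin}. For $n=2m$ we have $\lceil n/2\rceil=\lfloor n/2\rfloor=m$, so the correction term is $2\binom{m}{2}=m(m-1)=g(2m)-2g(m)$, as required. For $n=2m+1$ we have $\lceil n/2\rceil=m+1$ and $\lfloor n/2\rfloor=m$; using $g(m+1)=g(m)+a(m)$ together with the elementary identity $\binom{m+1}{2}+\binom{m}{2}=m^2$, the right-hand side of the recurrence becomes $g(m+1)+g(m)+m^2=2g(m)+a(m)+m^2=g(2m+1)$. Hence $g$ satisfies the recurrence in both parities, and the induction closes.

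The one genuinely delicate step is the parity split of $\sum_{j=1}^{n-1}(n-j)\nu_2(j)$: I must track carefully the index ranges (up to $m-1$ in the even case, up to $m$ in the odd case) and the leftover arithmetic sum $\sum(m-\ell)$ that produces the binomial correction, together with the extra term $a(m)$ that appears only in the odd case. Everything else—the rewrite via $a(k)=\sum_{j\le k}\nu_2(j)$, the identities $2\binom{m}{2}=m(m-1)$ and $\binom{m+1}{2}+\binom{m}{2}=m^2$, and the uniqueness of the solution of the recurrence—is routine.
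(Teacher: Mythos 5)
Your proof is correct, and it begins with the right repair of the statement's typos (reading $a(n)$ as the exponent $\nu_2(n!)$ and the summand as $a(k)$ rather than $a(n)$); the paper's own proof establishes exactly this corrected identity, in the equivalent form $f(n+1)=x(n)$ with $x(n)=\sum_{k=0}^{n}a(k)$. Your route, however, is genuinely different from the paper's. The paper never manipulates $2$-adic valuations: it quotes from OEIS (sequence A011371) the recurrence $a(n)=\lfloor n/2\rfloor+a(\lfloor n/2\rfloor)$ and then shows that the \emph{first differences} of $f$ obey the same recurrence, namely $f(n+1)-f(n)=\lfloor n/2\rfloor+f(\lfloor n/2\rfloor+1)-f(\lfloor n/2\rfloor)$ for $n\geq 2$, verifying this by applying Lemma~\ref{lem:recmin} separately in four cases according to $n\bmod 4$; the identification with the partial sums then follows by telescoping from the common base values. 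You instead verify directly that the partial-sum function $g(n)=\sum_{k=0}^{n-1}a(k)$ satisfies the two-term halving recurrence of Lemma~\ref{lem:recmin} itself, deriving from scratch the closed form $g(n)=\sum_{j=1}^{n-1}(n-j)\,\nu_2(j)$ and splitting by parity, which needs only two cases, and you conclude by uniqueness of the solution of the recurrence. Your index bookkeeping is accurate where it matters: the even case has $\ell\leq m-1$ while the odd case has $\ell\leq m$ and produces the extra term $\sum_{\ell=1}^{m}\nu_2(\ell)=a(m)$, which is precisely absorbed by $g(m+1)=g(m)+a(m)$ on the other side, together with $\binom{m+1}{2}+\binom{m}{2}=m^2$. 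What each approach buys: yours is self-contained (no externally cited recurrence for A011371) and halves the case analysis, at the cost of the valuation computation; the paper's avoids any arithmetic with $\nu_2$ but pays with the fourfold case split mod $4$ and the appeal to the quoted OEIS recurrence, and as a side benefit it records a new recurrence and interpretation for sequence A174605.
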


\begin{proof}
The sequence $(a(n))_n$ is sequence A011371 in Sloane's \textsl{On-Line Encyclopedia of Integer Sequences}  \cite{Sloane}, where we learn that it satisfies the recurrence 
$$
a(n)=\lfloor n/2\rfloor+ a(\lfloor n/2\rfloor).
$$
Let now  $(x(n))_n$ denote the sequence of partial sums of $(a(n))_n$, which is 
sequence A174605 in Sloane's Encyclopedia. Then, 
the sequence $(x(n))_n$ starts with $x(0)=x(1)=0$ and it satisfies the recurrence
$$
x(n)-x(n-1)=a(n)=\lfloor n/2\rfloor+ a(\lfloor n/2\rfloor)=\lfloor n/2\rfloor+ x(\lfloor n/2\rfloor)-x(\lfloor n/2\rfloor-1).
$$

We want to prove that $f(n+1)=x(n)$, for every $n\geq 0$. 
Since $f(1)=f(2)=0$, it remains to  check the equality
$$
f(n+1)-f(n)=\lfloor n/2\rfloor+ f(\lfloor n/2\rfloor+1)-f(\lfloor n/2\rfloor),\quad \mbox{for $n\geq 2$}.
$$
We prove this equality with the help of Lemma \ref{lem:rec} and by distinguishing four cases, depending on the residue of $n$ mod 4.
\begin{itemize}
\item If $n=4m$, then
$$
\begin{array}{rl}
f(n+1)-f(n) & = f(2m+1)+f(2m)+\binom{2m+1}{2}+\binom{2m}{2}\\ & \qquad -(f(2m)+f(2m)+\binom{2m}{2}+\binom{2m}{2})\\
& = f(2m+1)-f(2m)+\binom{2m+1}{2}-\binom{2m}{2}\\
& = f(2m+1)-f(2m)+2m\\
&= f(\lfloor n/2\rfloor+1)-f(\lfloor n/2\rfloor)+\lfloor n/2\rfloor
\end{array}
$$

\item If $n=4m+1$, then
$$
\begin{array}{rl}
f(n+1)-f(n) & = f(2m+1)+f(2m+1)+\binom{2m+1}{2}+\binom{2m+1}{2}\\ & \qquad -(f(2m+1)+f(2m)+\binom{2m+1}{2}+\binom{2m}{2})\\
& = f(2m+1)-f(2m)+\binom{2m+1}{2}-\binom{2m}{2}\\
& = f(2m+1)-f(2m)+2m\\
&= f(\lfloor n/2\rfloor+1)-f(\lfloor n/2\rfloor)+\lfloor n/2\rfloor
\end{array}
$$

\item If $n=4m+2$, then
$$
\begin{array}{rl}
f(n+1)-f(n) & = f(2m+2)+f(2m+1)+\binom{2m+2}{2}+\binom{2m+1}{2}\\ & \qquad -(f(2m+1)+f(2m+1)+\binom{2m+1}{2}+\binom{2m+1}{2})\\
& = f(2m+2)-f(2m+1)+\binom{2m+2}{2}-\binom{2m+1}{2}\\
& = f(2m+2)-f(2m+1)+2m+1\\
&= f(\lfloor n/2\rfloor+1)-f(\lfloor n/2\rfloor)+\lfloor n/2\rfloor
\end{array}
$$

\item If $n=4m+3$, then
$$
\begin{array}{rl}
f(n+1)-f(n) & = f(2m+2)+f(2m+2)+\binom{2m+2}{2}+\binom{2m+2}{2}\\ & \qquad -(f(2m+2)+f(2m+1)+\binom{2m+2}{2}+\binom{2m+1}{2})\\
& = f(2m+2)-f(2m+1)+\binom{2m+2}{2}-\binom{2m+1}{2}\\
& = f(2m+2)-f(2m+1)+2m+1\\
&= f(\lfloor n/2\rfloor+1)-f(\lfloor n/2\rfloor)+\lfloor n/2\rfloor
\end{array}
$$
\end{itemize}
This completes the proof. 
\end{proof}

In particular, this yields a new meaning and a new recurrence for sequence A174605 in Sloane's Encyclopedia.

\section{Expected value of $\Phi$ under the Yule model}

Let $\Phi_n$ be the random variable that chooses a tree $T\in \TB_n$ and computes its total cophenetic index $\Phi(T)$.  In this section we determine the expected value of $\Phi_n$ under the Yule model.
To do this, we shall make use of the following lemma, which can be useful to study the expected value under the Yule model of other  \emph{binary recursive tree shape statistics} in the sense of \cite{Matsen}.

\begin{lemma}\label{lem:YI}
Let $I$ be a mapping that associates to each phylogenetic tree a real number $\mathbb{R}$ satisfying the following two conditions:
\begin{enumerate}
\item[(a)] It is  invariant under tree isomorphisms  and relabelings of leaves.

\item[(b)] There exists a mapping $f:\NN\times \NN\to \RR$ such that, for every  phylogenetic trees $T,T'$ on disjoint sets of taxa $S,S'$, respectively,
$$
I(T\,\widehat{\ }\, T')=I(T)+I(T')+f(|S|,|S'|).
$$
\end{enumerate}
For every $n\geq 1$, let $I_n$ be the random variable that chooses a tree $T\in \TB_n$ and computes $I(T)$, and let $E_Y(I_n)$ be its expected value under the Yule model. Then,
$$
E_Y(I_n)=\frac{1}{n-1}\Big(2\sum_{k=1}^{n-1} E_Y(I_k)  + \sum_{k=1}^{n-1}f(k,n-k)\Big).
$$
\end{lemma}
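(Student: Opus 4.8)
The plan is to set up a recursion on $n$ by conditioning on the partition of the leaf set induced by the two children of the root, and the first, decisive step is to rewrite the Yule probability in a product form. Starting from the explicit formula for $P_Y(T)$ recalled in Section~2, I would isolate the factor contributed by the root $r$ (namely $1/(\kappa_T(r)-1)=1/(n-1)$) and regroup the remaining factors according to the two subtrees $T_a,T_b$ rooted at the children of $r$, with $n_a=|L(T_a)|$ and $n_b=|L(T_b)|$. Writing each remaining product in terms of $P_Y(T_a)$ and $P_Y(T_b)$ (viewing each $T_i$ as a tree on $\{1,\dots,n_i\}$, which is legitimate because condition (a) guarantees invariance under relabelings) and simplifying the power of $2$ using $n_a+n_b=n$, one obtains the clean identity
$$
P_Y(T)=\frac{2}{n-1}\cdot\frac{1}{\binom{n}{n_a}}\,P_Y(T_a)\,P_Y(T_b).
$$
This is the engine of the whole argument.

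Next I would expand $E_Y(I_n)=\sum_{T\in\TB_n}P_Y(T)\,I(T)$ using condition (b), namely $I(T)=I(T_a)+I(T_b)+f(n_a,n_b)$, and split the sum into a part $\Sigma_1$ collecting the $I(T_a)+I(T_b)$ terms and a part $\Sigma_2$ collecting the $f(n_a,n_b)$ terms. In both parts I would group the trees $T\in\TB_n$ according to the unordered split $\{A,B\}$ of $\{1,\dots,n\}$ determined by the root, and then factor the inner sum over the shapes on $A$ and on $B$ by means of the identity above. The two facts I need are $\sum_{T_A \text{ on } A}P_Y(T_A)=1$ and $\sum_{T_A \text{ on } A}P_Y(T_A)\,I(T_A)=E_Y(I_{|A|})$, both again consequences of the relabeling invariance (a). These give $\sum_{T:\,\text{split}=\{A,B\}}P_Y(T)=\tfrac{2}{n-1}\binom{n}{|A|}^{-1}$ for $\Sigma_2$, and replace the $I(T_A)$ contributions by $E_Y(I_{|A|})$ for $\Sigma_1$.

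Finally I would carry out the bookkeeping over splits. Converting the sum over unordered splits $\{A,B\}$ into half the sum over ordered pairs $(A,B)$ (the summands being symmetric in $A,B$), and noting that there are exactly $\binom{n}{k}$ ordered pairs with $|A|=k$, the factors $\binom{n}{k}^{-1}$ cancel and each block size $k\in\{1,\dots,n-1\}$ contributes once. This yields $\Sigma_2=\tfrac{1}{n-1}\sum_{k=1}^{n-1}f(k,n-k)$ and, after reindexing $k\mapsto n-k$, $\Sigma_1=\tfrac{2}{n-1}\sum_{k=1}^{n-1}E_Y(I_k)$; adding them gives the stated formula. I expect the main obstacle to be precisely this bookkeeping: correctly passing between labeled trees, unordered root splits, and ordered size pairs without over- or under-counting (in particular treating the balanced split $k=n-k$ consistently), and justifying at each stage that the invariance (a) lets me replace a sum over trees on an arbitrary label set by the corresponding quantity on $\{1,\dots,k\}$. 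The algebraic simplifications (the power-of-two cancellation and the cancellation of $\binom{n}{k}$) are routine once the decomposition of $P_Y$ is in place.
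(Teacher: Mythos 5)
Your proposal is correct and follows essentially the same route as the paper's own proof: the paper likewise first establishes the identity $P_Y(T_k\,\widehat{\ }\,T'_{n-k})=\frac{2}{(n-1)\binom{n}{k}}P_Y(T_k)P_Y(T'_{n-k})$ from the explicit Yule formula, then expands $E_Y(I_n)$ over root splits (handling the ordered/unordered double count with a factor $\frac12$, just as you do), uses relabeling invariance to reduce sums over trees on a $k$-subset to $E_Y(I_k)$ and to make the probabilities sum to $1$, and lets the binomial coefficients cancel to reach the stated recurrence.
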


\begin{proof}
First of all, notice that if $T_k\in \TT(S_k)$, with $S_k\subsetneq \{1,\ldots,n\}$ with $|S_k|=k$, and $T'_{n-k}\in\TT( \{1,\ldots,n\}\setminus S_k)$, then
$$
P_Y(T_k\widehat{\ }\, {}T'_{n-k})=\dfrac{2}{(n-1)\binom{n}{k}} P_Y(T_{k})P_Y(T'_{n-k})
$$
where $P_Y$ denotes the probability of a phylogenetic tree under the Yule model. This assertion is a direct consequence of the explicit probabilities of $T_k$, $T'_{n-k}$ and $T_k\widehat{\ }\, {}T'_{n-k}$ under the Yule model given in \S 2.2, and the fact that $V_{int}(T_k\widehat{\ }\, {}T'_{n-k})=V_{int}(T_{k})\cup V_{int}(T'_{n-k})\cup\{r\}$ (where $r$ denotes the root of $T_k\widehat{\ }\, {}T'_{n-k}$), these unions being disjoint.

Let us compute  now $E_Y(I_n)$ using its very definition:
$$
\begin{array}{l}
E_Y(I_n) \displaystyle =\sum_{T\in \TB_n} I(T)\cdot p_Y(T)
\\
\quad \displaystyle =\sum_{k=1}^{n-1}\sum_{S_k\subsetneq\{1,\ldots,n\}\atop |S_k|=k}
 \sum_{T_k\in \TB(S_k)}\sum_{T'_{n-k}\in \TB(S_k^c)}I(T_k\widehat{\ }\, {}T'_{n-k})\cdot p_Y(T_k\widehat{\ }\, {}T'_{n-k})
 \\ 
\quad  \displaystyle  =\frac{1}{2} \sum_{k=1}^{n-1}\binom{n}{k}
 \sum_{T_k\in \TB_k}\sum_{T'_{n-k} \in \TB_{n-k}}(I(T_k)+I(T'_{n-k})\\
\quad  \displaystyle  \qquad\qquad\qquad\qquad\qquad +f(k,n-k))\cdot \frac{2}{(n-1)\binom{n}{k}} P_Y(T_{k})P_Y(T'_{n-k})\\
\quad  \displaystyle  =\frac{1}{n-1}\sum_{k=1}^{n-1} 
 \sum_{T_k}\sum_{T'_{n-k}}(I(T_k)+I(T'_{n-k})+f(k,n-k))P_Y(T_{k})P_Y(T'_{n-k})
 \\
\quad  \displaystyle  =\frac{1}{n-1}\sum_{k=1}^{n-1} 
\Big( \sum_{T_k}\sum_{T'_{n-k} }I(T_k)P_Y(T_{k})P_Y(T'_{n-k})   \\
\quad  \displaystyle\qquad\qquad\qquad\qquad + \sum_{T_k}\sum_{T'_{n-k} } I(T'_{n-k}) P_Y(T_{k})P_Y(T'_{n-k}) \\
\quad  \displaystyle\qquad\qquad\qquad\qquad
 + \sum_{T_k}\sum_{T'_{n-k}} f(k,n-k)P_Y(T_{k})P_Y(T'_{n-k}) \Big)
 \\
\quad  \displaystyle  =\frac{1}{n-1}\sum_{k=1}^{n-1} 
\Big( \sum_{T_k} I(T_k)P_Y(T_{k})  +  \sum_{T'_{n-k}} I(T'_{n-k})  P_Y(T'_{n-k})   + f(k,n-k) \Big)
 \\
\quad  \displaystyle  =\frac{1}{n-1}\sum_{k=1}^{n-1} 
(E_Y(I_k) + E_Y(I_{n-k})   + f(k,n-k))
 \\
\quad  \displaystyle  =\frac{1}{n-1}\Big(2\sum_{k=1}^{n-1} 
E_Y(I_k)  + \sum_{k=1}^{n-1}f(k,n-k)\Big)
\end{array}
$$
\end{proof}

\begin{theorem}
Under the Yule model, the expected value of $\Phi_n$ is
$$
E_Y(\Phi_n)=n(n-1)-2n\sum_{i=2}^n\frac{1}{i}
$$
\end{theorem}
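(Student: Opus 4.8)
The plan is to recognize $\Phi$ as an instance of the general framework of Lemma \ref{lem:YI} and then solve the recurrence this yields. First I would verify the two hypotheses of that lemma for $I=\Phi$. Invariance under isomorphisms and relabelings was already observed just after the definition of $\Phi$, so condition (a) holds. For condition (b), I note that the join of two trees $T,T'$ on disjoint taxon sets $S,S'$ is precisely the $k=2$ case of Lemma \ref{lem:rec}, which gives
$$
\Phi(T\,\widehat{\ }\,T')=\Phi(T)+\Phi(T')+\binom{|S|}{2}+\binom{|S'|}{2}.
$$
Hence condition (b) holds with the symmetric function $f(a,b)=\binom{a}{2}+\binom{b}{2}$.

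Next I would substitute this $f$ into the conclusion of Lemma \ref{lem:YI}. The key simplification is the hockey-stick identity $\sum_{k=1}^{n-1}\binom{k}{2}=\binom{n}{3}$, which (together with the change of index $k\mapsto n-k$ in the second summand) evaluates $\sum_{k=1}^{n-1}\bigl(\binom{k}{2}+\binom{n-k}{2}\bigr)=2\binom{n}{3}$. Writing $a_n=E_Y(\Phi_n)$, this produces the full-history recurrence
$$
(n-1)a_n=2\sum_{k=1}^{n-1}a_k+2\binom{n}{3}.
$$

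To turn this into something solvable I would subtract from it the same identity written for $n-1$. Since $\binom{n}{3}-\binom{n-1}{3}=\binom{n-1}{2}$, the cumulative sums collapse and leave the first-order recurrence $(n-1)a_n=n\,a_{n-1}+2\binom{n-1}{2}=n\,a_{n-1}+(n-1)(n-2)$. Dividing through by $n(n-1)$ and setting $b_n=a_n/n$ converts this into the telescoping relation $b_n=b_{n-1}+1-\tfrac{2}{n}$. The base cases $a_1=a_2=0$ are immediate (a single leaf has no pairs of leaves, and the unique cherry has its root at depth $0$), giving $b_2=0$; summing then yields $b_n=\sum_{i=3}^{n}\bigl(1-\tfrac{2}{i}\bigr)=(n-2)-2\sum_{i=3}^{n}\tfrac{1}{i}$. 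Rewriting the harmonic tail as $\sum_{i=3}^{n}\tfrac1i=\sum_{i=2}^{n}\tfrac1i-\tfrac12$ gives $b_n=(n-1)-2\sum_{i=2}^{n}\tfrac1i$, and multiplying by $n$ recovers the claimed formula $E_Y(\Phi_n)=n(n-1)-2n\sum_{i=2}^{n}\tfrac1i$.

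The only genuinely delicate step is the subtraction that reduces the full-history recurrence to a first-order one; once that is done the rest is routine algebra. The main place to be careful is the bookkeeping of harmonic-sum index ranges and the correct treatment of the $n=2$ base case when shifting $\sum_{i=3}^{n}$ to $\sum_{i=2}^{n}$, since an off-by-one error there would corrupt the final constant.
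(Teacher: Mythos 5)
Your proposal is correct and follows essentially the same route as the paper: both verify the hypotheses of Lemma \ref{lem:YI} via Lemma \ref{lem:rec} with $f(a,b)=\binom{a}{2}+\binom{b}{2}$, reduce the full-history recurrence to the first-order relation $(n-1)a_n=n\,a_{n-1}+(n-1)(n-2)$ (your subtraction of the $(n-1)$ identity is the same maneuver as the paper's substitution of $E_Y(\Phi_{n-1})$ back into the sums), and then telescope $b_n=a_n/n$ from $b_2=0$. Your handling of the base case and the harmonic-sum index shift is also exactly as in the paper's proof.
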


\begin{proof}
Lemma \ref{lem:rec} implies that $\Phi$ satisfies the hypothesis of Lemma \ref{lem:YI} with $f(k,n-k)=\binom{k}{2}+\binom{n-k}{2}$.
Therefore
$$
\sum_{k=1}^{n-1}f(k,n-k)=\sum_{k=1}^{n-1}\Big(\binom{k}{2}+\binom{n-k}{2}\Big)=
2\sum_{k=1}^{n-1} \binom{k}{2},
$$
and hence
$$
E_Y(\Phi_n)=\frac{2}{n-1}\Big(\sum_{k=1}^{n-1} 
E_Y(\Phi_k)  + \sum_{k=1}^{n-1}\binom{k}{2}\Big).
$$
Then,
$$
\begin{array}{rl}
E_Y(\Phi_n) & \displaystyle = 
\frac{2}{n-1} \Big(\sum_{k=1}^{n-1} E_Y(\Phi_k) + \sum_{k=1}^{n-1}\binom{k}{2}\Big) \\
& \displaystyle = 
\frac{2}{n-1} E_Y(\Phi_{n-1})+\frac{n-2}{n-1}\cdot \frac{2}{n-2} \sum_{k=1}^{n-2} E_Y(\Phi_k)\\
&  \displaystyle\qquad\qquad\qquad    
+\frac{2}{n-1}\binom{n-1}{2}+\frac{n-2}{n-1}\cdot \frac{2}{n-2} \sum_{k=1}^{n-2} \binom{k}{2}
\\
& \displaystyle = 
\frac{2}{n-1} E_Y(\Phi_{n-1})+\frac{n-2}{n-1}E_Y(\Phi_{n-1})+\frac{2}{n-1}\binom{n-1}{2}\\
& \displaystyle
= \frac{n}{n-1} E_Y(\Phi_{n-1})+n-2
\end{array}
$$
To solve this equation, rewrite it as
$$
\frac{1}{n}E_Y(\Phi_n)=\frac{1}{n-1} E_Y(\Phi_{n-1})+\frac{n-2}{n}
$$
Setting $x_n=E_Y(\Phi_n)/n$, the sequence $(x_n)_n$ satisfies
$$
x_n=x_{n-1}+\frac{n-2}{n}\mbox{, starting with $x_2=0$}.
$$
Therefore 
$$
x_n=\sum_{i=3}^n \frac{i-2}{i}=(n-2)-2\sum_{i=3}^n\frac{1}{i}=(n-1)-2\sum_{i=2}^n\frac{1}{i}
$$
and thus, finally,
$$
E_Y(\Phi_n)=nx_n=n(n-1)-2n\sum_{i=2}^n\frac{1}{i}.
$$
\end{proof}

Let $S_n$ stand for the random variable that chooses a tree $T_n\in \TB_n$ and computes its Sackin index $S(T_n)$; cf. \S 2.2.
Notice that, since $E_Y(S_n)=2n\sum\limits_{j=2}^{n}1/j$ \cite{KiSl:93}, we have that
$$
E_Y(\Phi_n)+E_Y(S_n)=n(n-1).
$$
We have not been able to find a direct reason for this equality.

\begin{corollary}
$E_Y(\Phi_n)=n^2+(1-2 \gamma)n -2 n \ln(n)+o(n)$.
\end{corollary}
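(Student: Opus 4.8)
The plan is to derive this asymptotic expansion directly from the exact closed formula obtained in the preceding theorem, namely
$$
E_Y(\Phi_n)=n(n-1)-2n\sum_{i=2}^n\frac{1}{i},
$$
by substituting the standard asymptotic expansion of the harmonic numbers. The only nontrivial ingredient is the classical fact that the $n$-th harmonic number satisfies $H_n=\sum_{i=1}^n 1/i=\ln(n)+\gamma+O(1/n)$, where $\gamma$ is the Euler--Mascheroni constant; everything else is routine algebra.

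First I would rewrite the harmonic sum appearing in the theorem as $\sum_{i=2}^n 1/i = H_n - 1 = \ln(n)+\gamma-1+O(1/n)$. Substituting this into the exact formula gives
$$
E_Y(\Phi_n)=n^2-n-2n\bigl(\ln(n)+\gamma-1+O(1/n)\bigr).
$$
Expanding the product term by term yields $n^2-n-2n\ln(n)-2\gamma n+2n+O(1)$, and collecting the linear coefficients $-n+2n=n$ together with the $-2\gamma n$ term produces $(1-2\gamma)n$.

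The final step is simply to record that the error term $O(1)$ coming from the $O(1/n)$ remainder in the harmonic expansion (multiplied by $2n$) is $O(1)$, which is in particular $o(n)$, so that
$$
E_Y(\Phi_n)=n^2+(1-2\gamma)n-2n\ln(n)+o(n),
$$
as claimed. There is no genuine obstacle here: the statement is a direct asymptotic corollary, and the only point requiring any care is keeping track of the constant and linear coefficients correctly when multiplying the $-2n$ factor through the harmonic expansion. One could, if sharper control were desired, retain the full $O(1)$ error rather than $o(n)$, but since the statement only asserts accuracy up to $o(n)$ this is unnecessary.
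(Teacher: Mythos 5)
Your proposal is correct and follows exactly the route the paper intends: the corollary is stated without proof immediately after the exact formula $E_Y(\Phi_n)=n(n-1)-2n\sum_{i=2}^n 1/i$, and the expected argument is precisely your substitution of the harmonic-number asymptotics $H_n=\ln(n)+\gamma+O(1/n)$. Your bookkeeping of the coefficients is accurate, and your observation that the error is in fact $O(1)$, stronger than the stated $o(n)$, is a correct refinement.
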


So, the order $O(n^2)$ of the expected value under the Yule model of the total cophenetic index on $\BT_n$ is larger than the order $O(n\log(n))$ of the expected values of Sackin's and Colless' indices \cite{BFJ:06}.

From the expected values of the Sackin and the total cophenetic indices, we can deduce the expected value of the total area $D$ on $\TB_n$ under the Yule model. 

\begin{corollary}
Let $D_n$ be the random variable that chooses a tree $T\in \TB_n$ and computes its total area $D(T)$.  Under the Yule model, its expected value is
$$
E_Y(D_n)=2n(n+1)\sum_{i=2}^n\frac{1}{i}-2n(n-1)
$$
\end{corollary}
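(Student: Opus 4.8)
The plan is to combine the two expected-value formulas already in hand via Lemma~\ref{lem:total}. That lemma states the pointwise identity $(n-1)S(T)=2\Phi(T)+D(T)$ for every $T\in\TT_n$, and in particular for every binary tree. Since expectation is linear and this identity holds for each individual tree, it descends to the random variables under \emph{any} probability distribution on $\TB_n$, in particular under the Yule model. Thus the first step is simply to take expectations of both sides:
$$
(n-1)E_Y(S_n)=2E_Y(\Phi_n)+E_Y(D_n),
$$
which rearranges to
$$
E_Y(D_n)=(n-1)E_Y(S_n)-2E_Y(\Phi_n).
$$

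Next I would substitute the two known closed forms. We have $E_Y(S_n)=2n\sum_{i=2}^n 1/i$ from \cite{KiSl:93}, and the theorem just proved gives $E_Y(\Phi_n)=n(n-1)-2n\sum_{i=2}^n 1/i$. Plugging in,
$$
E_Y(D_n)=(n-1)\cdot 2n\sum_{i=2}^n\frac{1}{i}-2\Big(n(n-1)-2n\sum_{i=2}^n\frac{1}{i}\Big).
$$
The harmonic-sum terms collect as $2n(n-1)\sum_{i=2}^n 1/i + 4n\sum_{i=2}^n 1/i = 2n(n+1)\sum_{i=2}^n 1/i$, and the remaining term is $-2n(n-1)$, yielding exactly
$$
E_Y(D_n)=2n(n+1)\sum_{i=2}^n\frac{1}{i}-2n(n-1),
$$
as claimed.

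Since every ingredient is a result stated earlier in the excerpt, this is a short derivation with no real obstacle; the only point requiring a word of care is the passage from the pointwise identity of Lemma~\ref{lem:total} to the identity in expectation, which is justified simply by linearity of expectation (the relation holds tree by tree, so no independence or model-specific argument is needed). The bulk of the work is the elementary algebraic collection of the harmonic-sum coefficients, which I would verify but not belabor.
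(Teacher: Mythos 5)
Your proposal is correct and follows exactly the paper's own route: apply the pointwise identity $(n-1)S(T)=2\Phi(T)+D(T)$ of Lemma \ref{lem:total}, take expectations by linearity to get $E_Y(D_n)=(n-1)E_Y(S_n)-2E_Y(\Phi_n)$, and substitute the known formulas for $E_Y(S_n)$ and $E_Y(\Phi_n)$; your algebraic collection of the harmonic-sum coefficients is also verified correct. Nothing further is needed.
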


\begin{proof}
From Lemma \ref{lem:total} we deduce that
$$
2\Phi_n+D_n=(n-1)S_n,
$$
and therefore
$$
E_Y(D_n =(n-1)E_Y(S_n)-2E_Y(\Phi_n) =2n(n+1)\sum_{i=2}^n\frac{1}{i}-2n(n-1).
$$
\end{proof}

\begin{remark}
In \cite[p. 143, eq. (35)]{Mul11}, it is claimed that
$$
E_Y(D_n)=2n(n+1)\sum_{i=2}^n\frac{1}{i}-\frac{5}{2}n(n-1),
$$
which cannot be correct: since all three trees $T\in \BT_3$ have $D(T)=8$, it must happen that $E_Y(D_3)=8$, while the expression given in \textsl{loc. cit.} yields 
$E_Y(D_3)=5$. And incidentally, our formula does yield the correct value in this case.
\end{remark}

\section{Expected value of $\Phi$ under the uniform model}

 In this section we determine the expected value of $\Phi_n$ under the uniform model.
This expected value of $\Phi_n$ will be easily deduced, through Lemma \ref{lem:total}, from the expected value of  the total area, which was obtained in \cite{MirR10}, and  the expected value of the Sackin index, which we obtain in Theorem \ref{th:sackin} below. This last formula is, to our knowledge, new.

Since, under the uniform model, all trees in $\TT_n$ have the same probability, $1/(2n-3)!!$,  the expected value of $S_n$  under the uniform model is
$$
E_U(S_n)=\frac{\sum_{T\in \TB_n} S(T)}{(2n-3)!!}.
$$
So, we need to compute the numerator in this fraction. 
\begin{lemma}
For every $n\geq 3$, $\displaystyle \sum_{T\in \TB_n} S(T)=n\sum_{k=1}^{n-1} \dfrac{(2n-k-3)!k^2}{(n-k-1)!2^{n-k-1}}$.
\end{lemma}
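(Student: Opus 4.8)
The plan is to reduce the sum over all trees to a count of forests via a depth decomposition, so that the formula for $|\FB_{m,n}|$ recalled in Section~\ref{sec:prel} can be plugged in directly. Starting from the definition $S(T)=\sum_{i=1}^n\delta_T(i)$, I would interchange the order of summation and exploit the invariance of $\BT_n$ under relabelings: the permutation of labels exchanging two leaves $i$ and $j$ is a bijection of $\BT_n$ sending each $T$ to a tree $T'$ with $\delta_{T'}(j)=\delta_T(i)$, so $\sum_{T\in\BT_n}\delta_T(i)$ does not depend on $i$. Hence
$$
\sum_{T\in\BT_n}S(T)=\sum_{i=1}^n\sum_{T\in\BT_n}\delta_T(i)=n\sum_{T\in\BT_n}\delta_T(\ell)
$$
for any fixed leaf $\ell$, and it suffices to evaluate $\sum_{T\in\BT_n}\delta_T(\ell)$.

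Next I would group the trees by the value $d=\delta_T(\ell)$, which ranges from $1$ to $n-1$, writing
$$
\sum_{T\in\BT_n}\delta_T(\ell)=\sum_{d=1}^{n-1} d\cdot \#\{T\in\BT_n: \delta_T(\ell)=d\}.
$$
The key step is a bijection between the trees $T\in\BT_n$ with $\delta_T(\ell)=d$ and the binary ordered $d$-forests on the $(n-1)$-element set $\{1,\dots,n\}\setminus\{\ell\}$. Given such a $T$, consider the path $r=u_0,u_1,\dots,u_d=\ell$ from the root to $\ell$; each $u_0,\dots,u_{d-1}$ is internal and, being bifurcating, has exactly one child off this path, which is the root of a subtree $F_j$ (indexed $j=1,\dots,d$ by increasing distance from the root). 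These subtrees are binary and nonempty, and their leaf sets partition $\{1,\dots,n\}\setminus\{\ell\}$, so $(F_1,\dots,F_d)$ is a binary ordered $d$-forest, an element of $\FB_{d,n-1}$. Conversely, from an ordered forest $(F_1,\dots,F_d)$ one rebuilds $T$ uniquely by hanging $F_j$ as the off-path child at the $j$-th node of a path that terminates in $\ell$; these two constructions are mutually inverse. Consequently $\#\{T\in\BT_n:\delta_T(\ell)=d\}=|\FB_{d,n-1}|$.

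Finally I would substitute the closed formula $|\FB_{m,n}|=\dfrac{(2n-m-1)!\,m}{(n-m)!\,2^{n-m}}$ with $m=k$ and $n$ replaced by $n-1$, obtaining $|\FB_{k,n-1}|=\dfrac{(2n-k-3)!\,k}{(n-k-1)!\,2^{n-k-1}}$, so that $k\cdot|\FB_{k,n-1}|$ is exactly the $k$-th summand $\dfrac{(2n-k-3)!\,k^2}{(n-k-1)!\,2^{n-k-1}}$ of the claimed expression; multiplying by $n$ then closes the argument. I expect the main obstacle to be the careful verification that the forest decomposition is canonical---in particular that the off-path subtrees inherit a well-defined order from the root-to-$\ell$ path and that the reconstruction is unambiguous even though the children of each internal node are unordered---since this is precisely what makes the count land on \emph{ordered} forests rather than on some symmetrized object. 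The remaining manipulations are the routine algebraic rewriting indicated above.
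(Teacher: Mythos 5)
Your proposal is correct and follows essentially the same route as the paper: interchange the two sums, use relabeling invariance to reduce to a single fixed leaf, count the trees with $\delta_T(\ell)=k$ via the bijection with ordered binary $k$-forests on the remaining $n-1$ labels, and substitute the closed formula for $|\FB_{k,n-1}|$. The only differences are presentational: the paper leaves the path-decomposition bijection to a figure while you verify it explicitly, and you correctly invoke $\FB_{k,n-1}$ where the paper's text writes $\FF_{k,n-1}$ (a typo, since it then applies the binary-forest formula).
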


\begin{proof}
For every $k=1,\ldots,n-1$, let
$$
c_{k,n}=|\{ T\in \TB_n\mid \delta_T(1)=k\}|=|\{ T\in \TT_n\mid \delta_T(i)=k\}|\mbox{ for every $1\leq i\leq n$}.
$$
Then
$$
\begin{array}{rl}
\displaystyle \sum_{T\in \TB_n} S(T) & \displaystyle =\sum_{T\in \TB_n}\sum_{i=1}^{n} \delta_T(i) =\sum_{i=1}^{n}\sum_{T\in \TB_n} \delta_T(i)\\
& \displaystyle  =\sum_{i=1}^{n}\sum_{k=1}^{n-1} k\cdot |\{ T\in \TT_n\mid \delta_T(i)=k\}|\\
& \displaystyle =\sum_{i=1}^{n}\sum_{k=1}^{n-1} k\cdot |\{ T\in \TT_n\mid \delta_T(1)=k\}|  =n\sum_{k=1}^{n-1} k\cdot c_{k,n}.
\end{array}
$$
It remains to compute $c_{k,n}$ for $k\geq 1$. To do so, notice that every tree $T\in \TB_n$  such that $\delta(1)=k$ will have the form described in Fig. \ref{fig:forests}. Therefore, it is determined by the ordered $k$-forest $T_1,T_2,\ldots,T_{k}$ on $\{2,\ldots,n\}$, and thus
$$
c_{k,n}=|\FF_{k,n-1}|= \frac{(2n-k-3)!k}{(n-k-1)!2^{n-k-1}},
$$
from which the expression in the statement follows.
\end{proof}

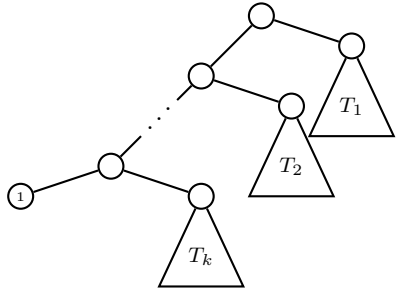
\begin{figure}[htb]
\begin{center}
\begin{tikzpicture}[thick,>=stealth,scale=0.4]
\draw(-1,3) node[tre] (1) {}; \etq 1
\draw(2,4) node[tre] (x1) {}; 
\draw (x1)--(1);
\draw (x1)--(3,5);
\draw (3.3,5.3) node {.};
\draw (3.6,5.6) node {.};
\draw (3.9,5.9) node {.};
\draw(5,7) node[tre] (x2) {}; 
\draw (4.2,6.2)--(x2);
\draw(7,9) node[tre] (r) {}; 
\draw (x2)--(r);
\draw(5,3) node[tre] (tk) {}; 
\draw (tk)--(3.6,0)--(6.4,0)--(tk);
\draw(5,1) node  {\footnotesize $T_{k}$};
\draw (x1)--(tk);
\draw(8,6) node[tre] (t2) {}; 
\draw (t2)--(6.6,3)--(9.4,3)--(t2);
\draw(8,4) node  {\footnotesize $T_{2}$};
\draw (x2)--(t2);
\draw(10,8) node[tre] (t1) {}; 
\draw (t1)--(8.6,5)--(11.4,5)--(t1);
\draw(10,6) node  {\footnotesize $T_{1}$};
\draw (r)--(t1);
\end{tikzpicture}
\end{center}
\caption{\label{fig:forests} 
The structure of a tree $T$ with $\delta_T(1)=k$.}
\end{figure}

Now, recall that the (\emph{generalized}) \emph{hypergeometric function} ${}_pF_q$ is defined  \cite{Bayley} as 
$$
{}_pF_q\bigg(\begin{array}{rrr} a_1,&\ldots, &a_p \\[-0.5ex] b_1,& \ldots,  &b_q\end{array};z\bigg)=\sum_{k\geq 0} \frac{(a_1)_k\cdots (a_p)_k}{(b_1)_k\cdots (b_q)_k}\cdot \frac{z^k}{k!},
$$ 
where $(a)_k := a\cdot (a+1)\cdots (a+k-1)$. Many popular software systems, like Mathematica or R, have implementations of these functions.

\begin{theorem}\label{th:sackin}
The expected value of the random variable $S_n$ under the uniform model  is
$$
E_U(S_n)=\frac{n}{2n-3} {}_3 F_2
\bigg(\begin{array}{l} 2,\ 2,\ 2-n \\[-0.5ex] 1,\  4-2n\end{array};2\bigg)
$$
\end{theorem}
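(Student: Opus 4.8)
The plan is to start from the explicit formula established in the previous lemma, namely
$$
E_U(S_n)=\frac{\sum_{T\in\TB_n}S(T)}{(2n-3)!!}=\frac{n}{(2n-3)!!}\sum_{k=1}^{n-1}\frac{(2n-k-3)!\,k^2}{(n-k-1)!\,2^{n-k-1}},
$$
and to show that the finite sum on the right is precisely the hypergeometric expression claimed. The natural strategy is to treat the sum as a hypergeometric series: I would denote its general term by $t_k$ and compute the term ratio $t_{k+1}/t_k$ as a rational function of $k$. The point of writing $E_U(S_n)$ via a ${}_3F_2$ evaluated at $z=2$ is exactly that its series $\sum_{k\ge0}(a_1)_k(a_2)_k(a_3)_k/((b_1)_k(b_2)_k)\cdot z^k/k!$ has a term ratio that is a fixed rational function of $k$, so matching the two term ratios will pin down the parameters and the argument.

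The key computational steps I would carry out, in order, are the following. First, reindex the sum so that it begins at $k=1$ (or shift to start the hypergeometric series at its $k=0$ term); factoring out $t_1$ makes the ${}_3F_2$ begin correctly. Second, form the ratio $t_{k+1}/t_k$ from the factorials, $2$-powers and the $k^2$ factor. Concretely, $(2n-(k+1)-3)!/(2n-k-3)!=1/(2n-k-4+1)\cdots$ contributes a falling-factorial factor, $1/(n-(k+1)-1)!$ over $1/(n-k-1)!$ contributes $(n-k-1)$, the power $2^{-(n-k-2)}/2^{-(n-k-1)}$ contributes a factor $2$, and $k^2$ becomes $(k+1)^2/k^2$. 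Third, collect these into a single rational function of $k$ and compare it, factor for factor, with the ${}_3F_2$ term ratio
$$
\frac{(k+a_1)(k+a_2)(k+a_3)}{(k+b_1)(k+b_2)}\cdot\frac{z}{k+1}.
$$
Reading off the numerator roots and denominator roots should yield upper parameters $2,2,2-n$, lower parameters $1,4-2n$, and argument $z=2$, after which normalizing the leading constant against $(2n-3)!!$ and the prefactor $n/(2n-3)$ completes the identification.

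The main obstacle I expect is the bookkeeping of the Pochhammer-versus-factorial translation and the shift in the summation index, rather than any conceptual difficulty. In particular one must be careful that $(2-n)_k$ and $(4-2n)_k$ terminate the series at exactly $k=n-2$ (so that the finite sum and the formally infinite ${}_3F_2$ agree termwise), and that the repeated upper parameter $2$ together with the lower parameter $1$ correctly reproduce the $k^2$ numerator and the $1/(n-k-1)!$ behaviour. Verifying that the constant prefactor works out to $n/(2n-3)$ — in particular cancelling $(2n-3)!!=(2n-3)!/(2^{n-2}(n-2)!)$ against the $k=1$ term — is the one place where an arithmetic slip is easy, so I would check it against the small cases already tabulated (for instance $n=3,4,5$) to confirm the normalization before declaring the identity proved.
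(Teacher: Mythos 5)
Your proposal is correct and follows essentially the same route as the paper: both start from the lemma's closed form $E_U(S_n)=\frac{n}{(2n-3)!!}\sum_{k=1}^{n-1}\frac{(2n-k-3)!\,k^2}{(n-k-1)!\,2^{n-k-1}}$ and identify the finite sum as the terminating ${}_3F_2$ at $z=2$, differing only in mechanics (you match term ratios $t_{k+1}/t_k=\frac{2(k+1)^2(n-k-1)}{k^2(2n-k-3)}$ against the hypergeometric ratio, while the paper rewrites each term directly through binomial coefficients into Pochhammer form after the same index shift). Your normalization check indeed works out, since $t_1=\frac{(2n-4)!}{(n-2)!\,2^{n-2}}$ and $(2n-3)!!=\frac{(2n-3)!}{2^{n-2}(n-2)!}$ give the prefactor $\frac{n\,t_1}{(2n-3)!!}=\frac{n}{2n-3}$, and the upper parameter $2-n$ terminates the series at $k=n-2$ before the lower parameter $4-2n$ can cause trouble.
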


\begin{proof}
By the last lemma, we have that
$$
E_U(S_n)=\frac{\sum_{T\in \TB_n} S(T)}{(2n-3)!!}=\frac{n}{(2n-3)!!} \sum_{k=1}^{n-1}  \frac{(2n-k-3)!k^2}{(n-k-1)!2^{n-k-1}}.
$$
Now
$$
\begin{array}{rl}
\displaystyle \frac{nk^2(2n-k-3)!}{(2n-3)!!(n-k-1)!2^{n-k-1}} & \displaystyle =
  \frac{nk^2(2n-k-3)!2^{n-2}(n-2)!}{(2n-3)!(n-k-1)!2^{n-k-1}}\\[2ex] & \displaystyle=
  \frac{nk^2(2n-k-3)!2^{n-2}(n-2)!k!}{(2n-3)!(n-k-1)!2^{n-k-1}k!} \\[2ex] & \displaystyle=
 \frac{nk^22^{k-1}\binom{n-1}{k}}{(n-1)\binom{2n-3}{k}}
\end{array}
$$
and thus
$$
\begin{array}{l}
E_U(S_n)  = \displaystyle
\frac{n}{n-1}\sum_{k=1}^{n-1}  k^22^{k-1}\cdot \frac{\binom{n-1}{k}}{\binom{2n-3}{k}}\\
\qquad = \displaystyle\frac{n}{n-1} \sum_{k=1}^{n-1} \frac{k^22^{k-1}(n-1)(n-2)(n-3) \cdots (n-k)}{(2n-3)(2n-4)(2n-5)\cdots (2n-k-2)}\\
\qquad = \displaystyle\frac{n}{2n-3} \sum_{k=1}^{n-1} \frac{k^22^{k-1}(n-2)(n-3) \cdots (n-k)}{(2n-4)(2n-5)\cdots (2n-k-2)}\\
\qquad = \displaystyle\frac{n}{2n-3} \sum_{k=1}^{n-1} \frac{k^22^{k-1}(2-n)(2-n+1)\cdots (-n+k)}{(4-2n)(4-2n+1)\cdots (2-2n+k)}\\
\qquad = \displaystyle\frac{n}{2n-3} \sum_{k=0}^{n-2} \frac{(k+1)^22^k(2-n)(2-n+1)\cdots (1-n+k)}{(4-2n)(4-2n+1)\cdots (3-2n+k)}\\
\qquad = \displaystyle\frac{n}{2n-3} \sum_{k\geq 0} \frac{((k+1)!)^2(2-n)(2-n+1)\cdots(1-n+k)\cdot 2^k}{(k!)^2(4-2n)(4-2n+1)\cdots (3-2n+k)}\\
\qquad = \displaystyle\frac{n}{2n-3} \sum_{k\geq 0} \frac{(2)_k(2)_k(2-n)_k}{(1)_k(4-2n)_k}\cdot \frac{2^k}{k!}  =\dfrac{n}{2n-3} {}_3 F_2
\bigg(\begin{array}{l} 2,\ 2,\ 2-n \\[ -0.5ex] 1,\  4-2n\end{array};2\bigg)
\end{array}
$$
as we claimed.
\end{proof}

We have now the following result.

\begin{theorem}
Under the uniform model, the expected value of $\Phi_n$ is
$$
E_U(\Phi_n)=\binom{n}{2}\Bigg(\frac{1}{2n-3} {}_3 F_2
\bigg(\begin{array}{l} 2,\ 2,\ 2-n \\[-0.5ex] 1,\  4-2n\end{array}; 2\bigg)-\frac{1}{2}\cdot \frac{(2n-2)!!}{(2n-3)!!}\Bigg)\sim \frac{\sqrt{\pi}}{4}n^{5/2}
$$
\end{theorem}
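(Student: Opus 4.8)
The plan is to derive $E_U(\Phi_n)$ from the linear identity in Lemma~\ref{lem:total} by taking expectations, which reduces the problem to two expected values that are already in hand. Lemma~\ref{lem:total} gives $2\Phi(T)+D(T)=(n-1)S(T)$ for every $T\in\TT_n$; summing over $\TB_n$ against the uniform probability $1/(2n-3)!!$ and using linearity of expectation yields
\[
2\,E_U(\Phi_n)=(n-1)E_U(S_n)-E_U(D_n),
\]
so that $E_U(\Phi_n)=\tfrac12\big((n-1)E_U(S_n)-E_U(D_n)\big)$.

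The two ingredients on the right are available. First, $E_U(S_n)$ is the formula of Theorem~\ref{th:sackin}. Second, the expected total area under the uniform model, computed in \cite{MirR10}, is $E_U(D_n)=\binom{n}{2}\tfrac{(2n-2)!!}{(2n-3)!!}$ (a quick sanity check: $D$ is constant equal to $8$ on $\TB_3$, so $E_U(D_3)=8$, consistent with the earlier remark). Substituting, I would simplify the Sackin term using $(n-1)\cdot\tfrac{n}{2n-3}=\tfrac{2}{2n-3}\binom{n}{2}$, so that
\[
\tfrac12(n-1)E_U(S_n)=\binom{n}{2}\cdot\frac{1}{2n-3}\,{}_3F_2\!\bigg(\begin{array}{l}2,\ 2,\ 2-n\\[-0.5ex] 1,\ 4-2n\end{array};2\bigg),
\]
while $\tfrac12 E_U(D_n)=\tfrac12\binom{n}{2}\tfrac{(2n-2)!!}{(2n-3)!!}$. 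Collecting the common factor $\binom{n}{2}$ produces exactly the claimed closed form; this part is purely algebraic bookkeeping.

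For the asymptotic estimate both terms must be analyzed to leading order. From \cite{BFJ:06} one has $E_U(S_n)\sim\sqrt{\pi}\,n^{3/2}$, whence $\tfrac12(n-1)E_U(S_n)\sim\tfrac{\sqrt{\pi}}{2}n^{5/2}$. For the area term I would rewrite $\tfrac{(2n-2)!!}{(2n-3)!!}=\tfrac{4^{n-1}((n-1)!)^2}{(2n-2)!}=4^{n-1}\big/\binom{2n-2}{n-1}$ and apply the central-binomial asymptotic $\binom{2m}{m}\sim 4^m/\sqrt{\pi m}$ with $m=n-1$, giving $\tfrac{(2n-2)!!}{(2n-3)!!}\sim\sqrt{\pi n}$ and hence $\tfrac12 E_U(D_n)\sim\tfrac{\sqrt{\pi}}{4}n^{5/2}$. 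Subtracting, $\tfrac{\sqrt{\pi}}{2}n^{5/2}-\tfrac{\sqrt{\pi}}{4}n^{5/2}=\tfrac{\sqrt{\pi}}{4}n^{5/2}$, which is the asserted leading term.

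The main obstacle is not the exact identity, which is routine substitution into Lemma~\ref{lem:total}, but the asymptotics: both pieces are of order $n^{5/2}$, so one must verify that their leading coefficients $\tfrac{\sqrt{\pi}}{2}$ and $\tfrac{\sqrt{\pi}}{4}$ genuinely differ. Were they equal, the order of $E_U(\Phi_n)$ would drop and a finer expansion of both $E_U(S_n)$ and the double-factorial ratio would be required. Since here they differ, the leading-order estimates for $E_U(S_n)$ and for $\binom{2n-2}{n-1}$ suffice, and no second-order analysis is needed.
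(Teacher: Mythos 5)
Your proposal is correct and follows essentially the same route as the paper: take expectations in Lemma~\ref{lem:total}, substitute $E_U(S_n)$ from Theorem~\ref{th:sackin} and $E_U(D_n)=\binom{n}{2}\frac{(2n-2)!!}{(2n-3)!!}$ from \cite{MirR10}, and obtain the asymptotics from $E_U(S_n)\sim\sqrt{\pi}\,n^{3/2}$ and $E_U(D_n)\sim\frac{\sqrt{\pi}}{2}n^{5/2}$. Your only addition is making explicit, via the central-binomial estimate, the Stirling-type asymptotic for $\frac{(2n-2)!!}{(2n-3)!!}$ that the paper simply cites, together with the (correct) observation that the two $n^{5/2}$ leading coefficients differ, so no finer expansion is needed.
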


\begin{proof}
The expected values under the uniform model of $S_n$ and $D_n$ are:
$$
\begin{array}{ll}
\displaystyle E_U(S_n)=\frac{n}{2n-3} {}_3 F_2
\bigg(\begin{array}{l} 2,\ 2,\ 2-n \\[-0.5ex] 1,\  4-2n\end{array};2\bigg)&  \quad  \mbox{ by Theorem \ref{th:sackin}}
\\
\displaystyle E_U(D_n)=\binom{n}{2} \cdot\frac{(2n-2)!!}{(2n-3)!!}&   \quad  \mbox{\cite {MirR10}}
\end{array}
$$
Then, by Lemma \ref{lem:total},
$$
\begin{array}{rl}
E_U(\Phi) & \displaystyle =\frac{n-1}{2}E_U(S_n)-\frac{1}{2}E_U(D_n)\\ & \displaystyle =
\frac{n-1}{2}\cdot \frac{n}{2n-3} {}_3 F_2
\bigg(\begin{array}{l} 2,\ 2,\ 2-n \\[-0.5ex] 1,\  4-2n\end{array};2\bigg)
-\frac{1}{2}\binom{n}{2} \cdot\frac{(2n-2)!!}{(2n-3)!!}\\
& \displaystyle =
\binom{n}{2}\Bigg(\frac{1}{2n-3} {}_3 F_2
\bigg(\begin{array}{l} 2,\ 2,\ 2-n \\[-0.5ex] 1,\  4-2n\end{array}; 2\bigg)-\frac{1}{2}\cdot \frac{(2n-2)!!}{(2n-3)!!}\Bigg)
\end{array}
$$
The assertion $E_U(\Phi_n) \sim \frac{\sqrt{\pi}}{4}n^{5/2}$ comes easily from Lemma \ref{lem:total}, and the facts  that 
$E(S_n) \sim \sqrt{\pi}n^{3/2}$ \cite{BFJ:06}, and that, using Stirling's approximation for large factorials,
$E_U(D_n)\sim \frac{\sqrt{\pi}}{2}n^{5/2}$ \cite {MirR10}.  
\end{proof}

\section{Discussion and conclusions}
In this paper we have introduced a new balance index for phylogenetic trees, the total cophenetic index $\Phi$. This index makes sense for arbitrary phylogenetic trees, it can be computed in linear time, and it has a larger range of values than Sackin's or Colless' indices. We have computed its maximum and minimum values for binary and arbitrary phylogenetic trees, and its expected value under the Yule and the uniform models.  In a future work we plan to study other statistical properties of $\Phi$, like its variance, its limiting distribution or its correlation to other balance indices.

From the point of view of the measurement of the degree of symmetry of a tree, our index outperforms the resolution power of Sackin's and Colless' indices. We already saw some hints of this property in the previous sections: for instance, in Theorem \ref{th:min}, where we proved that the only trees $T\in \TB_n$ that have minimum $\Phi(T)$ are the maximally balanced, something that is not true in general for  Sackin's and Colless' indices (recall Fig. \ref{fig:minind}); or in the lemmas previous to the proof of this theorem, where we saw that any interchange of cousins that modifies the balance of their grandparent also modifies the value of $\Phi$. As a further evidence of this greater resolution power, we have estimated the probability that a pair of trees $T_1,T_2\in \BT_n$ have $I(T_1)=I(T_2)$, for $I=C,S,\Phi$. To do so, for every $n=2,\ldots,10^4$ we have chosen randomly a number $N$ of pairs of trees in $\TB_n$ (for the first few values of $n$, $N$ was taken to be $|\TB_n|$, but starting at $n=8$, we took $N=3000$), and computed, for $I=C,S,\Phi$,
$$
\hat{p}_n(I)=\frac{\mbox{number of pairs $(T_1,T_2)$ with $n$ leaves such that $I(T_1)=I(T_2)$}}{N}.
$$
Fig. \ref{TIES} summarizes the results. It plots $\log(\hat{p}_n(I))$ for the three balance indices  as a function of $\log(n)$. We can see that that the total cophenetic index has the lowest such estimated probability of a tie.   We plan to perform a deeper study of the probability of ties for the different balance indices in a future paper.

\begin{figure}[htb]
\begin{center}
\begin{minipage}{0.6\linewidth}
\begin{center}
\includegraphics[height=6cm]{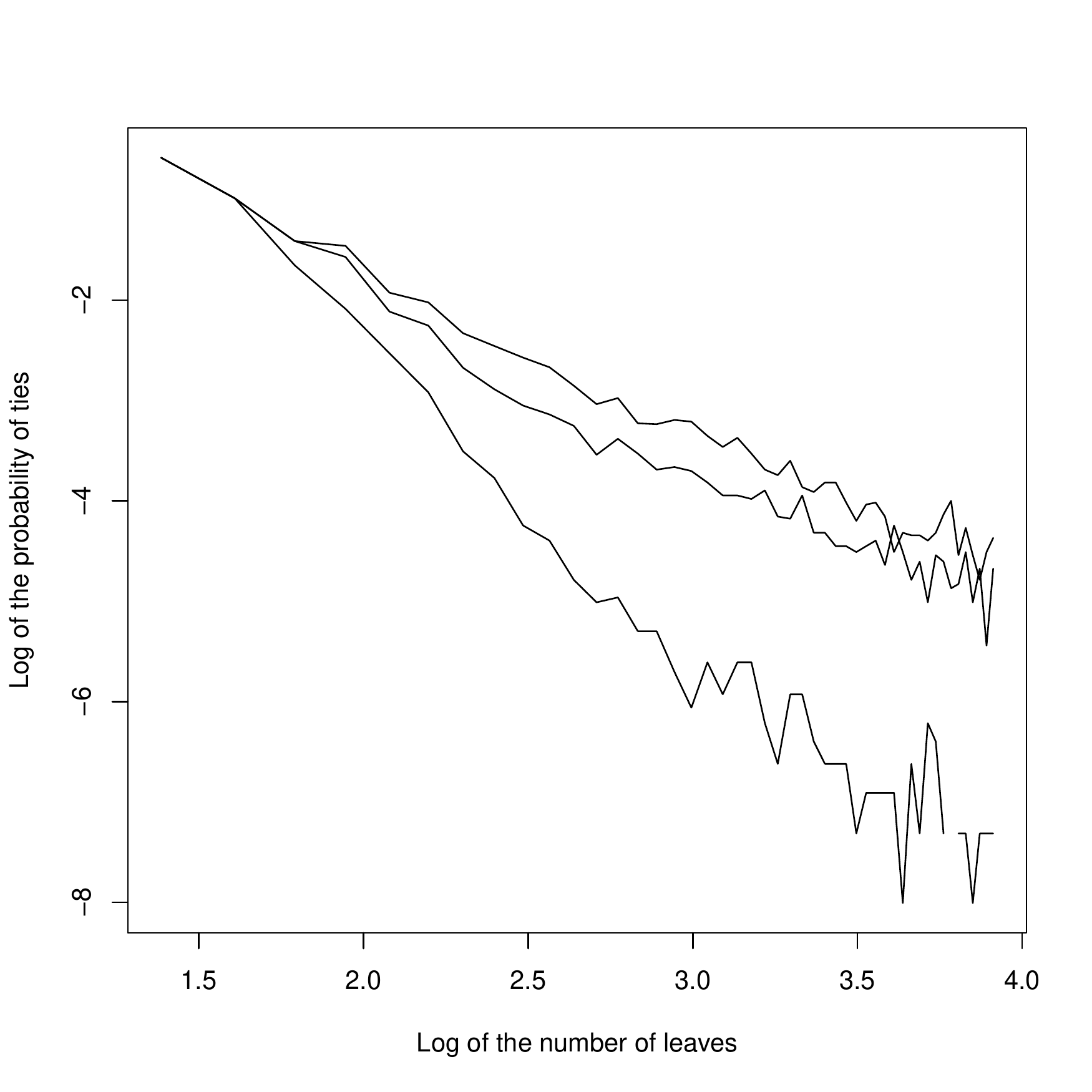}
\end{center}
\end{minipage}
\hspace*{-0.75cm}\begin{minipage}{0.2\linewidth}
\vspace*{1.5cm}

$\leftarrow$ Sackin

$\leftarrow$  Colless
\vspace*{0.7cm}

$\leftarrow$ $\Phi$
\end{minipage}
\end{center}\label{TIES}%
\caption{Log-log plot of the estimated probability of a tie for three balance indices.}
\end{figure}%

This greater resolution power of $\Phi$ makes it a better candidate to be used to test evolutionary hypotheses.
We have performed a preliminary such test on the TreeBASE database \cite{treebase}. We have considered the numbers $n$ of leaves for which the TreeBASE contains at least 20 binary phylogenetic trees with $n$ leaves, and for each such $n$ we have computed the mean of the total cophenetic indices of the corresponding binary trees. Fig. \ref{means} plots the log of these means as a function of  $\log(n)$. We have added the curves of the log of the expected values of $\Phi_n$ under the Yule distribution (lower curve) and under the uniform distribution (upper curve), again as a function of  $\log(n)$. This figure shows that the total cophenetic indices of the binary phylogenetic trees in TreeBASE are better explained by the uniform model than by the Yule  model. We also plan to report  in a future paper on more extensive tests on stochastic models of evolutionary processes using the total cophenetic index.

\begin{figure}[htb]
\centering
\includegraphics[height=6cm]{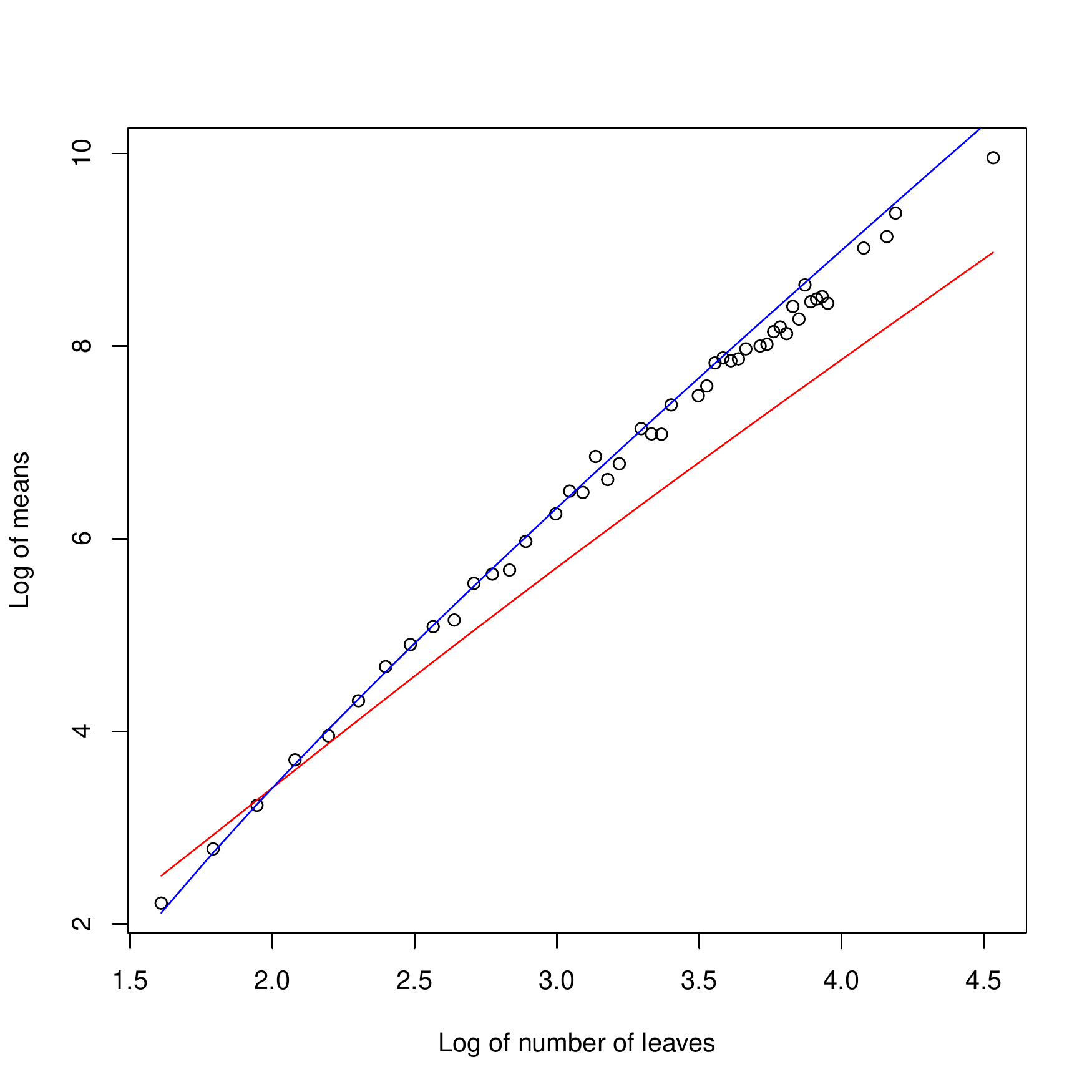}
\label{means}%
\caption{Log-log plots of the mean of the total cophenetic index of the binary trees in TreeBASE with a fixed number  $n$ of leaves,  of $E_Y(\Phi_n)$ (lower curve) and $E_U(\Phi_n)$ (upper curve).}
\end{figure}%

\section*{Acknowledgements}  The research reported in this paper has been partially supported by the Spanish government and the UE FEDER program, through projects MTM2009-07165 and TIN2008-04487-E/TIN.  We thank G. Cardona, E. Hern\'andez-Garc\'\i a, and J. Mir\'o for several comments on previous versions of this work.

%

\end{document}